\numberwithin{equation}{section}
\newtheorem{theorem}{Theorem}[section]
\newtheorem{proposition}{Proposition}[section]
\theoremstyle{definition}
\DeclareFontFamily{U}{MnSymbolC}{}
\DeclareSymbolFont{MnSyC}{U}{MnSymbolC}{m}{n}
\DeclareFontShape{U}{MnSymbolC}{m}{n}{
    <-6>  MnSymbolC5
   <6-7>  MnSymbolC6
   <7-8>  MnSymbolC7
   <8-9>  MnSymbolC8
   <9-10> MnSymbolC9
  <10-12> MnSymbolC10
  <12->   MnSymbolC12}{}
\DeclareMathSymbol{\intprod}{\mathbin}{MnSyC}{'270}
\newcommand{\mc}[1]{\mathcal{#1}}
\newcommand{\bs}[1]{\boldsymbol{#1}}
\newcommand{\mcal}[1]{\mc{#1}}
\newcommand{\scp}[2]{\left<#1\,,\,#2\right>}
\newcommand{\ad}{\operatorname{ad}}
\def\p{{\partial}}
\def\ep{{\epsilon}}
\def\bR{{\mathbf{R}}}
\def\bu{{\boldsymbol{u}}}
\def\bx{{\mathbf{x}}}
\def\bxi{{\boldsymbol{\xi}}}
\def\p{\partial}
\pgfplotsset{compat=1.16}
\begin{document}

\title{Geometric theory of perturbation dynamics around non-equilibrium fluid flows}
\author{Darryl D. Holm, Ruiao Hu\footnote{Corresponding author, email: ruiao.hu15@imperial.ac.uk}, and Oliver D. Street\\
d.holm@imperial.ac.uk, ruiao.hu15@imperial.ac.uk, o.street18@imperial.ac.uk\\
Department of Mathematics, Imperial College London \\ SW7 2AZ, London, UK}
\date{\today}

\maketitle
\begin{abstract}
    The present work investigates the evolution of linear perturbations of time-dependent ideal fluid flows with advected quantities, expressed in terms of the second order variations of the action corresponding to a Lagrangian defined on a semidirect product space. This approach is related to Jacobi fields along geodesics and several examples are given explicitly to elucidate our approach. Numerical simulations of the perturbation dynamics are also presented. 
\end{abstract}

\tableofcontents

\section{Introduction}\label{Intro-sec}
We are dealing with the stability analysis of ideal fluid dynamics when the perturbations are in the form of displacement vector fields. In our approach, the displacement vector fields are shown to possess their own dynamics which depend functionally on the unperturbed fluid flow. The methodology we employ to derive the dynamics of the unperturbed flow and its perturbations is the Euler-Poincar\'e variational principle \cite{HMR1998} with higher-order variations. In the Euler-Poincar\'e variational principle, the 1$^{st}$ order variations yield essentially all of the well known models of ideal fluid dynamics \cite{HMR1998b}. When the stationary condition of the 1$^{st}$ order variations are satisfied, the 2$^{nd}$ order variations yield the dynamics of the perturbations of the steady flows arising from the 1$^{st}$ order variations. The advantage of this variational approach is that the dynamics of the perturbations arising from the 2$^{nd}$ order variations are linear perturbation equations for arbitrary time-dependent flows. Thus, it is a generalisation to the fluid equilibrium flows that are typically used in stability analysis.

    The emergence of fluid models from the 1$^{st}$ order variations in Hamilton's Principle was first revealed long before VI Arnold's observation \cite{Arnold1966} that the solutions of Euler's fluid equations represent time ($t$) dependent geodesic paths $g_t$  on  the manifold of smooth invertible maps (diffeomorphisms). That is, $g_t\in {\rm Diff}(M)$ acts on the fluid reference configuration $({\rm Diff}\times M\to M$) in the flow domain $M$, whose paths $\bx_t(\bx_0)=g_t\bx_0$ with $g_0\bx_0=\bx_0$ are the trajectories of Lagrangian fluid parcels. Arnold's observation opened the flood gates of new mathematical research in fluid dynamics. For a review, see e.g., \cite{AK1998}. In Arnold \cite{Arnold1966}, the variational principle which governs the Euler fluid motion was found to be the Hamilton principle $\delta S=0$ with $S=\int_0^T\ell(u)\,dt$ whose Lagrangian $\ell(u)$ is the fluid kinetic energy $\tfrac12\|u\|^2_{L^2}$ for fluid velocity, $u$. The fluid kinetic energy serves as the metric on the tangent space of the diffeomorphisms, expressed in terms of the \emph{Eulerian} velocity vector fields. The variations are taken with respect to the infinitesimal action of the diffeomorphisms on volume-preserving (spatial) vector fields. Since this observation, the framework was generalised to semidirect product spaces where the symmetry is broken by the inclusion of advected quantities \cite{HMR1998b}. As a result, the characterisation of a broad class of models through the variational approach became available. These developments related \emph{symmetry reduction} to fluid dynamics.

The 2$^{nd}$ order variations of the Euler-Poincar\'e variational principle have already been effective in deriving mean-flow equations \cite{Holm1999PhysD,Holm2002Chaos,Holm2002PhysD} as small-amplitude generalised Lagrangian mean (called $\mathfrak{glm}$) equations leading to turbulence models such as the Navier-Stokes-alpha model and its ideal version the Euler-alpha model. The alpha turbulence models introduced in \cite{Chen-etal1998,HMR1998b} were derived by applying the Lagrangian mean to the 2$^{nd}$ order variations for fluid dynamics in \cite{Holm1999PhysD,Holm2002Chaos,Holm2002PhysD}. They were then analysed mathematically in \cite{FHT2001,FHT2002} and applied computationally to primitive-equation global ocean circulation models in \cite{Hecht-etal2008a,Hecht-etal2008b,Hecht-etal2008c}. Furthermore, a relationship exists between the 2$^{nd}$ order variations of the Euler-Poincar\'e variational principle, Jacobi fields and the Jacobi equations. The Jacobi field equations for the evolution of an initial Lagrangian displacement away from a geodesic flow are usually treated in terms of covariant derivatives. However, here we will discuss Jacobi fields in the Euler-Poincar\'e framework. For examples and references to the classical approach to the treatment of Jacobi fields, see \cite{CF1996, Chiaffredo-etal2023,JJost2023,Jost2008,Michor2006,ModinPerrot2023,Preston2004,WashabaughPreston2017,Younes2007}.

Stability analysis is a vast field with a long history. Below, we will briefly review the Jacobi approach to stability analysis to better illuminate how this paper contributes to the literature. 

\paragraph{Brief review of the Jacobi approach to stability analysis.}
Stationary variational principles deal with balance and closure in dynamical systems. 
The dynamics near balance and the discovery of imbalance is the province of perturbation theory.
Higher order variational principles which govern the perturbations can tell us about instability and the initial phases of imbalance. 
Of course, for nonlinear systems tipping points also may exist and those can take the system to states far away from balance.
For viscous fluids, the primary tipping point is the onset of turbulence, whose true nature remains fascinating but elusive and beyond the treatment of linear imbalance and instability of ideal fluid flow considered here.

The classical studies of imbalance and instability refer to the dynamical behaviour of solutions near equilibria. One of the most beautiful mathematical theories of imbalance was introduced by Jacobi to describe solutions near geodesic flows. Besides Jacobi, this topic also stimulated research by the likes of Dirichlet, Dedekind, Riemann, Poincar\'e, and Lyapunov \cite{Sreenivasan2019}. Jacobi's theory and Riemann's results inspired Chandrasekhar's focus on ellipsoidal figures of equilibrium of rotating self-gravitating fluids \cite{Chandra1969}. That focus on steady ellipsoidal fluid configurations in turn led Chandrasekhar to his work on the emission of gravity waves by rotating ellipsoidal fluid masses in \cite{Chandra1970}, and eventually to Chandrasekhar's mathematical theory of black holes in \cite{Chandra1983}. The angular momentum of the rotating fluid body is the source of the emission of gravity waves. However, in the current paper we shall be concerned with the dual of angular momentum
introduced by Dedekind: namely, we shall focus on the fluid circulation in a fixed frame.


In Arnold \cite{Arnold1965}, a stationary flow of an ideal fluid is shown to be Lyapunov stable if the quadratic form given by the second variation of the kinetic energy restricted to coadjoint orbits in the algebra of smooth divergence-free vector fields is either positive, or sufficiently negative. The Hamiltonian version of Arnold's stability result for the Euler fluid equations was obtained by applying the Legendre transformation to their application of Hamilton's principle. This step led to new methods for determining sufficient conditions for nonlinear stability of fluid and plasma equilibria which include potential energy such as thermodynamics and magnetic energy, as well as kinetic energy; see, e.g., \cite{HMRW1985}. 
The present work investigates the quadratic form given by the second variation of the Hamilton principle whose fluid Lagrangian contains both the kinetic and potential energy. In addition, this work investigates the second-variation Hamilton's principle for time-dependent fluid flows, and is not limited to time-independent fluid equilibria.

\paragraph{Plan of the paper.} Section \ref{sec2-symred} sets the stage for the remainder of the paper, by defining symmetry-reduced variational principles at 1$^{st}$ and 2$^{nd}$ order. By considering geodesics of a right-invariant metric on a Lie group, the equations resulting from the 2$^{nd}$ variation are shown to be an extension of the literature on Jacobi fields. Section \ref{sec3-examples} presents several examples of linearisation of well-known Euler-Poincar\'e fluid equations based on second-order symmetry-reduced variational principles. The examples in Section \ref{sec3-examples} demonstrate that the current approach is not limited to geodesic motions, nor is it limited to steady flows. Section \ref{sec:numerics} presents numerical simulations for the Euler-Bousinessq equations and their perturbation equations resulting from the $2^{nd}$ order variations in a vertical slice domain. In Section \ref{sec4-sumout}, we summarise the present results and discuss future developments. 

\section{Euler-Poincar\'e variational principles and their linearisation}\label{sec2-symred}

Following Arnold's identification of fluid flows as paths $g_t$ on the manifold of smooth invertible maps, it is natural to consider the dynamics of imbalance induced by perturbations of nonequilibrium fluid flows in the light of the Jacobi equations for geodesic flows \cite{Jost2008,Michor2006}. Here, we are concerned with time-parameterised curves $g_t$ in the diffeomorphism group, and a family of variations of these curves, parameterised by $\epsilon\in\mathbb{R}$, $g_{\epsilon,t}$ such that $g_{0,t} = g_t$. We seek to extend the Jacobi equations of displacement dynamics near a geodesic flow to apply in semidirect product spaces which admit the dynamics of an Eulerian vector field $\xi(\bx,t)\in\mathfrak{X}(M)$ representing the displacement defined by
\begin{align}
\xi(\bx_t,t) :=  \frac{\p g_{t,\ep}}{\p \ep}\Big|_{\ep=0}g_t^{-1}\bx_t = \delta g_t \bx_0 =: \delta g_t g_t^{-1}\bx_t \,,
\label{ep-def}
\end{align}
for each fluid element in the flow $\bx_t=g_t\bx_0$ initially at the reference position $g_0\bx_0=\bx_0$, where $g_{t,\ep}$ are arbitrary disturbances of $g_t$ near the identity of the group of diffeomorphisms. This infinitesimal displacement vector field appears naturally in the Euler-Poincar\'e variational principle which we will discuss next before considering the correspondence to Jacobi fields and the Jacobi equation. 

\subsection{The Euler-Poincar\'e and Lie-Poisson equations} \label{subsec:EP_and_LP}
In the Euler-Poincar\'e theory of ideal fluid dynamics \cite{HMR1998}, the fluid is formally described by the elements in the semidirect product space comprising the vector fields $\mathfrak{X}(M)$, and the space of advected quantities, $V^*$, on which there exists a right representation of ${\rm Diff}(M)$ by pullback. The space of vector fields $\mathfrak{X}(M)$ contains the fluid's Eulerian velocity vector field $u$, defined in terms of curves, $g_t$, in the diffeomorphism group as $ u :=\dot{g}_t g_t^{-1}$.
The vector space $V^*(M)$, defined following convention as the dual to a vector space $V(M)$, is the space of advected quantities. Let $a\in V^*(M)$, we say that $a$ is an advected quantity if it satisfies the pushforward relation $a_t = a_0 g_t^{-1}$. This $a_t$ is the global solution of the advection relation,
\begin{align}
\p_t a_t = - \mathcal{L}_{u_t} a_t \,,
\label{advect-def}
\end{align}
in which $\mathcal{L}_{u_t} $ denotes Lie derivative with respect to the time-dependent fluid velocity vector field $u_t\in\mathfrak{X}(M)$.
Advected quantities are tensor fields, examples of which are mass density (a volume form) or potential temperature (a $0$-form). Arising from the variations of the paths $g_t$ on the manifold of diffeomorphisms, the corresponding variations of the fluid velocity $u = \dot{g}_tg_t^{-1}$ and the advected quantity $a_t  = a_0 g_t^{-1}$ are given, respectively, by  \cite{HMR1998}
\begin{align}
\delta u = \p_t \xi - {\rm ad}_u \xi := \p_t \xi + \big[ u \,,\,\xi \big]
\quad\hbox{and}\quad
\delta a = - \mathcal{L}_{\xi} a
\,.
\label{EP-var-def}
\end{align}
Here, the displacement vector field $\xi\in\mathfrak{X}(M)$ is defined in equation \eqref{ep-def} and $- \,{\rm ad}_u \xi := \big[ u \,,\,\xi \big]\in \mathfrak{X}(M)$ for right adjoint Lie algebra action is the Jacobi-Lie bracket of the vector fields $u$ and $\xi$. 

Let $\ell(u,a)$ denote the fluid Lagrangian. By applying Hamilton's principle, $\delta S = 0$, to the action integral $S=\int_0^T \ell(u,a)\,dt$ with the constrained variations of fluid variables $(u, a)$ as given in \eqref{EP-var-def}, we find
\begin{align}
\begin{split}
0 = \delta S &= \delta \int_0^T \ell(u,a)\,dt
\\&= 
\int_0^T \scp{\frac{\delta\ell }{\delta u} }{{\delta u}} +  \scp{\frac{\delta\ell }{\delta a} }{{\delta a}}
\,dt
\\&= 
\int_0^T \scp{\frac{\delta\ell }{\delta u} }{\p_t \xi - {\rm ad}_u \xi  } +  \scp{\frac{\delta\ell }{\delta a} }{- \mathcal{L}_{\xi} a}
\,dt
\\&= 
\int_0^T \scp{- \big( \p_t  + {\rm ad}^*_u \big)\frac{\delta\ell }{\delta u} + \frac{\delta\ell }{\delta a} \diamond a }{\xi  } \,dt
+ \scp{\frac{\delta\ell }{\delta u}}{\xi}\Big|_0^T
\,,\end{split}
\label{HP4EP}
\end{align}
where the brackets $\scp{\cdot}{\cdot}$ denote the $L^2$ pairing on
the flow manifold $M$ and the diamond $(\diamond)$ operator is defined as
\begin{align}
\scp{\frac{\delta\ell }{\delta a} }{- \mathcal{L}_{\xi} a}_{V\times V^*} =: \scp{ \frac{\delta\ell }{\delta a} \diamond a }{\xi}_{\mathfrak{X}^*(M)\times \mathfrak{X}(M)} 
\label{diamond-def}
\end{align}
in which we have applied natural boundary conditions when integrating by parts in space. The stationarity condition in Hamilton's principle $\delta S = 0$ with vanishing endpoint conditions in time on $\xi = \delta g\cdot g^{-1}$ then yields the Euler-Poincar\'e equation of fluid motion
\begin{equation}
\big(\p_t + \mathcal{L}_{u} \big) \frac{\delta\ell }{\delta u} = \frac{\delta\ell }{\delta a} \diamond a
\,,
\label{EP-eqns1}
\end{equation}
where the advected quantities $a$ satisfy the advection relation,
\begin{equation}
    \big(\p_t + \mathcal{L}_{u} \big) a = 0 \,.
\end{equation}
Note that the fact that $a$ is an advected quantity is encoded within the variational procedure \eqref{HP4EP} within the form of the variation $\delta a$.
This calculation leads us to the Kelvin-Noether theorem, written below.

\begin{theorem}[Kelvin-Noether theorem \cite{HMR1998}]\label{KN-thm}
Given the local advection of mass by fluid transport, 
\begin{align}
\big(\p_t   + \mathcal{L}_{u} \big) (Dd^nx) = 0
\,,\label{D-eqn}
\end{align}
implied by the push-forward relation $D_td^nx_t = g_{t\,*}(D_0d^nx_0)$ for each fluid element in the flow $\bx_t=g_t\bx_0$ initially at the reference position $g_0\bx_0=\bx_0$ and volume element $d^nx$ in $n$ dimensions, then the Euler-Poincar\'e equation of fluid motion \eqref{EP-eqns1} implies the Kelvin-Noether relation,
\begin{align}
\frac{d}{dt}\oint_{C(u)} \frac{1}{D}\frac{\delta\ell }{\delta u}
=
\oint_{C(u)}\frac{1}{D}\frac{\delta\ell }{\delta a} \diamond a
\,,\label{KN-eqn-1st}
\end{align}
for any material loop $C(u)$ moving with the flow velocity $u=\dot{g}_tg_t^{-1}$.
\end{theorem}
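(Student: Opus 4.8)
The plan is to reduce the statement to two ingredients: a transport theorem for circulation integrals around material loops, and the observation that dividing the momentum density $\delta\ell/\delta u$ by the advected mass density $D$ produces a genuine one-form whose Lie-transport is governed directly by the Euler-Poincar\'e equation \eqref{EP-eqns1}.

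First I would establish the transport theorem. Parameterising the material loop as $C(u)_t = g_t\circ C_0$ for a fixed reference loop $C_0$, any one-form $\alpha$ satisfies $\oint_{C(u)_t}\alpha = \oint_{C_0}g_t^*\alpha$ since the pullback absorbs the flow map. Differentiating in $t$ under the integral and using the standard identity $\tfrac{d}{dt}g_t^*\alpha = g_t^*\big(\partial_t\alpha + \mathcal{L}_u\alpha\big)$ for $u=\dot g_t g_t^{-1}$, then pulling back to $C(u)_t$, gives
\begin{align}
\frac{d}{dt}\oint_{C(u)}\alpha = \oint_{C(u)}\big(\partial_t + \mathcal{L}_u\big)\alpha
\,.
\label{transport-loop}
\end{align}

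The key algebraic step is to apply \eqref{transport-loop} with $\alpha = \tfrac1D \frac{\delta\ell}{\delta u}$. Here $\frac{\delta\ell}{\delta u}\in\mathfrak{X}^*(M)$ is a one-form density, while $D\,d^nx$ is a top-form density; their quotient $\tfrac1D\frac{\delta\ell}{\delta u}$ is therefore a bona fide one-form, exactly the object integrated around $C(u)$. Because $\partial_t + \mathcal{L}_u$ is a derivation and $D\,d^nx$ is advected, i.e. $(\partial_t + \mathcal{L}_u)(D\,d^nx)=0$ by \eqref{D-eqn}, the Leibniz rule yields the commutation identity
\begin{align}
\big(\partial_t + \mathcal{L}_u\big)\Big(\frac1D\frac{\delta\ell}{\delta u}\Big) = \frac1D\,\big(\partial_t + \mathcal{L}_u\big)\frac{\delta\ell}{\delta u}
\,.
\label{commute-D}
\end{align}
Substituting the Euler-Poincar\'e equation \eqref{EP-eqns1} for the right-hand side of \eqref{commute-D}, and then inserting the result into \eqref{transport-loop}, produces the Kelvin-Noether relation \eqref{KN-eqn-1st} directly.

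I expect the main obstacle to be the careful bookkeeping of tensor densities: one must verify that $\delta\ell/\delta u$ is genuinely a one-form density and that dividing by $D$ removes precisely the density weight, so that the volume form cancels cleanly and \eqref{commute-D} holds as stated. A secondary technical point is justifying the interchange of $d/dt$ with the loop integral and the absence of endpoint contributions in \eqref{transport-loop}, which is exactly where the hypothesis that $C(u)$ is a \emph{closed} loop transported by the flow is essential, since it is this closedness that lets $g_t^*$ act without generating boundary terms.
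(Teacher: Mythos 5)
Your proof is correct and is essentially the standard argument behind this theorem (the paper itself defers to the cited reference \cite{HMR1998} rather than spelling it out): pull the circulation integral back to the reference loop, use $\tfrac{d}{dt}g_t^*\alpha = g_t^*(\partial_t+\mathcal{L}_u)\alpha$, cancel the density weight of $\delta\ell/\delta u$ against the advected density $D\,d^nx$, and substitute the Euler--Poincar\'e equation \eqref{EP-eqns1}. One minor quibble: closedness of $C(u)$ is not actually what prevents boundary terms in \eqref{transport-loop} (that identity holds for any material curve); the loop structure matters only for the interpretation as circulation, e.g.\ so that exact one-forms drop out in applications.
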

The Euler-Poincar\'e equation \eqref{EP-eqns1}, together with the advection relation for $a$, can be shown to be equivalent to semidirect-product Lie-Poisson equations on $\mathfrak{X}^*\ltimes V^*$, where $\ltimes$ denotes the semidirect product. Indeed, we make the following Legendre transformation from $\mathfrak{X}$ to $\mathfrak{X}^*$
\begin{equation*}
    m = \frac{\delta\ell}{\delta u} \,,\quad\hbox{and}\quad h(m,a) = \scp{m}{u} - \ell(u,a) \,.
\end{equation*}
Under the assumption that the map $u \rightarrow m$ is a diffeomorphism from $\mathfrak{X}$ to $\mathfrak{X}^*$, we have that $u = \delta h / \delta m$ and the Lie-Poisson equations arise from the following application of Hamilton's principle.
\begin{align*}
\begin{split}
0 = \delta S &= \delta \int_0^T \ell(u,a)\,dt = \delta \int_0^T \scp{m}{u} - h(m,a)\,dt
\\&= 
\int_0^T \scp{m}{\delta u} + \scp{\delta m}{u - \frac{\delta h}{\delta m}} - \scp{\frac{\delta h }{\delta a} }{{\delta a}}
\,dt
\\&= 
\int_0^T \scp{m}{\p_t \xi - {\rm ad}_u \xi  }+ \scp{\delta m}{u - \frac{\delta h}{\delta m}} +  \scp{\frac{\delta h}{\delta a} }{\mathcal{L}_{\xi} a}
\,dt
\\&= 
\int_0^T \scp{- \big( \p_t  + {\rm ad}^*_u \big)m - \frac{\delta h }{\delta a} \diamond a }{\xi  }+ \scp{\delta m}{u - \frac{\delta h}{\delta m}} \,dt
+ \scp{\frac{\delta\ell }{\delta u}}{\xi}\Big|_0^T
\,.\end{split}
\label{HP4LP}
\end{align*}
This calculation yields the implicit form of the following Lie-Poisson equation for fluid motion
\begin{equation}\label{LP-eqns1}
\big(\p_t   + \mathcal{L}_{\delta h / \delta m} \big) m = -\frac{\delta h }{\delta a} \diamond a
\,,\quad\hbox{and}\quad
\big(\p_t   + \mathcal{L}_{\delta h / \delta m} \big) a = 0
\,.
\end{equation}
Note that these equations are equivalent to the Lie-Poisson equation
\begin{equation}
    (\p_t + \ad^*_{\delta h / \delta \mu})\mu = 0 \,,\quad\hbox{for}\quad \mu = (m,a) \in \mathfrak{X}^*\ltimes V^* \,,
\end{equation}
where $\ad^*$ is the coadjoint representation of $\mathfrak{X}\ltimes V$ acting on 
its dual $\mathfrak{X}^*\ltimes V^*$.

\subsection{The second variation.}
The first and second variations are defined as
\begin{equation}
    \delta f(\mu;\delta\mu) := \frac{d}{d\epsilon}\bigg|_{\epsilon=0} f(\mu+\epsilon\delta\mu) \,,\quad\hbox{and}\quad \delta^2f(\mu;\delta\mu):= \frac{d^2}{d\epsilon^2}\bigg|_{\epsilon=0}f(\mu+\epsilon\delta\mu) \,,
\end{equation}
respectively. In each example we consider in this article, the second variation produces a symmetric bilinear form, which we will denote also by $\delta^2f(\delta\mu,\delta\mu)$. Recall the definition of a functional derivative,
\begin{equation*}
	\scp{\frac{\delta f}{\delta\mu}}{\delta\mu} := \delta f(\mu;\delta \mu) \,.
\end{equation*}
That is, $\scp{\delta f / \delta \mu}{\delta \mu}$ is the first term of the expansion around $\epsilon=0$ of the first derivative of $f(\mu+\epsilon\delta\mu)$ in $\epsilon$. When taking second variations, we will often wish to go further in this expansion. Indeed, we see that
\begin{equation}\label{eqn:useful_expansion}
\begin{aligned}
    \frac{d}{d\epsilon}f(\mu+\epsilon\delta\mu) &= \frac{d}{d\epsilon}\bigg|_{\epsilon=0} f(\mu+\epsilon\delta\mu) + \epsilon \frac{d^2}{d\epsilon^2}\bigg|_{\epsilon=0}f(\mu+\epsilon\delta\mu) + \mathcal{O}(\epsilon^2)
    \\
    &= \scp{\frac{\delta f}{\delta\mu}}{\delta \mu} + \epsilon \delta^2f(\delta\mu,\delta\mu) + \mathcal{O}(\epsilon^2)
    \\
    &= \scp{\frac{\delta f}{\delta\mu}}{\delta \mu} + \frac{\epsilon}{2}\scp{\frac{\delta(\delta^2f)}{\delta(\delta\mu)}}{\delta\mu}  + \mathcal{O}(\epsilon^2) \,,
\end{aligned}
\end{equation}
where the final equality holds since $\delta^2f(\delta\mu,\delta\mu)$ is a symmetric bilinear form. This calculation allows us to take functional derivatives to the next order of the expansion, since the term of order $\epsilon$ involves a pairing of an element of $\mathfrak{g}$ against $\delta\mu$.

\subsection{Linearised Euler-Poincar\'e and Lie-Poisson equations}
We may consider an expansion of the Lie-Poisson equation
\begin{equation}\label{eqn:LP}
	\p_t \mu + \ad^*_{\frac{\delta h}{\delta \mu}} \mu = 0 \,,
\end{equation}
by exploiting the calculation in equation \eqref{eqn:useful_expansion} to notice that
\begin{equation}
    \left(\p_t + \ad^*_{\frac{\delta h}{\delta \mu} + \frac{\epsilon}{2}\frac{\delta(\delta^2h)}{\delta(\delta\mu)} } \right)(\mu + \epsilon\delta\mu) = \mathcal{O}(\epsilon^3) \,.
\end{equation}
Notice that the first order terms in this equation correspond exactly to the Lie-Poisson equation \eqref{eqn:LP}, and the next order terms give the linearised equation for the perturbation $\delta \mu$ around the solution $\mu$ to the Lie-Poisson system
\begin{equation}\label{eqn:linearised_LP}
	\p_t\delta \mu + \ad^*_{\frac{\delta h}{\delta \mu}} \delta\mu = -\frac12 \ad^*_{\frac{\delta(\delta^2h)}{\delta(\delta\mu)}} \mu \,. 
\end{equation}
This can be written in terms of Poisson bracket-like objects as
\begin{equation}\label{eqn:linearised_LP_brackets}
	\p_t f(\delta \mu) = \scp{\delta \mu}{\left[ \frac{\delta f}{\delta(\delta \mu)} , \frac{\delta h}{\delta \mu} \right]} + \frac12 \scp{\mu}{\left[ \frac{\delta f}{\delta(\delta \mu)} , \frac{\delta(\delta^2 h)}{\delta(\delta \mu)} \right]} \,.
\end{equation}
For a fixed solution, $\mu$, of the Lie-Poisson equation, the second bracket here is a \emph{frozen} Lie-Poisson bracket. Notice that there is a direct connection here to the motion considered in a previous study \cite{HMRW1985}, where it was observed that for an equilibrium solution, $\mu_e$, corresponding to a critical point of $h+C$ for some Casimir, $C$, this equation for $\delta \mu$ is Hamiltonian with respect to the frozen Lie-Poisson bracket with Hamiltonian $\delta^2h_C$. The full equation \eqref{eqn:linearised_LP} considered here is \emph{not} itself a Lie-Poisson system.

\paragraph{Hamiltonian systems on semidirect product spaces and continuum dynamics.} When the configuration space is a semidirect product Lie group, $G\ltimes V$, where $G$ acts on $V$ through a right representation, the equations of motion on the Lie co-algebra, $\mathfrak{g}^*\ltimes V^*$, are
\begin{equation}\label{eqn:semidirectproduct_linearised_LP}
    \p_t (\delta\mu,\delta a) + \ad^*_{\frac{\delta h}{\delta(\mu,a)}}(\delta\mu,\delta a) = -\frac12 \ad^*_{\frac{\delta(\delta^2 h)}{\delta(\delta\mu,\delta a)}}(\mu,a) \,.
\end{equation}
These equations can be manipulated into a more convenient form by determining $\ad^*$ for the semidirect product space in the usual way. That is, an alternative form of equation \eqref{eqn:semidirectproduct_linearised_LP} can be directly deduced from equation \eqref{eqn:linearised_LP_brackets} by inserting the standard Lie bracket for semidirect product spaces as
\begin{equation}
\begin{aligned}
	\p_t f(\delta \mu,\delta a) &= \scp{(\delta \mu , \delta a)}{\left[ \frac{\delta f}{\delta(\delta \mu,\delta a)} , \frac{\delta h}{\delta (\mu,a)} \right]} + \frac12 \scp{(\mu,a)}{\left[ \frac{\delta f}{\delta(\delta (\mu,a))} , \frac{\delta(\delta^2 h)}{\delta(\delta \mu,\delta a)} \right]}
    \\
    &= \scp{\delta\mu}{\ad_{\frac{\delta f}{\delta(\delta\mu)}}\frac{\delta h}{\delta\mu}} + \scp{\delta a}{\mathcal{L}^T_{\frac{\delta f}{\delta(\delta\mu)}}\frac{\delta h}{\delta a} - \mathcal{L}^T_{\frac{\delta h}{\delta\mu}}\frac{\delta f}{\delta(\delta a)}} 
    \\
    &\qquad + \frac12\scp{\mu}{\ad_{\frac{\delta f}{\delta(\delta\mu)}}\frac{\delta(\delta^2h)}{\delta(\delta\mu)}} + \frac12\scp{a}{\mathcal{L}^T_{\frac{\delta f}{\delta(\delta\mu}}\frac{\delta(\delta^2h)}{\delta(\delta a)} - \mathcal{L}^T_{\frac{\delta(\delta^2h)}{\delta(\delta\mu)}}\frac{\delta f}{\delta(\delta a)}}\,,
\end{aligned}
\end{equation}
where $\mathcal{L}^T$ is the transpose of the Lie derivative. Integrating by parts gives the following equations for the dynamics of the perturbations,
\begin{align}
    \left( \p_t + \ad^*_{\frac{\delta h}{\delta\mu}} \right)\delta\mu &= -\frac12\ad^*_{\frac{\delta(\delta^2h)}{\delta(\delta\mu)}}\mu - \frac{\delta h}{\delta a} \diamond \delta a - \frac12 \frac{\delta(\delta^2 h)}{\delta(\delta a)}\diamond a
    \label{eqn:linearised_LP_semidirect1}
    \,,\\
    \left( \p_t + \mathcal{L}_{\frac{\delta h}{\delta\mu}} \right)\delta a &= - \frac12\mathcal{L}_{\frac{\delta(\delta^2 h)}{\delta(\delta \mu)}}a
    \,.
    \label{eqn:linearised_LP_semidirect2}
\end{align}
These comprise a useful form in which to write the equations, since the operator $\p_t + \ad^*_{\delta h / \delta \mu}$ is the standard geometric form of the advective derivative and mirrors the left hand side of the regular Lie-Poisson and Euler-Poincar\'e equations. It then remains only to determine the right hand sides of the equation in an analogous fashion to how fluid models are derived from the Euler-Poincar\'e equation \eqref{EP-eqns1}, as discussed in \cite{HMR1998}. It is well known that the equations with advected quantities which break the relabelling symmetry under the entire diffeomorphism group, whilst not Euler-Poincar\'e equations on the semidirect product space $\mathfrak{g}\ltimes V^*$, are the standard Lie-Poisson equations on the dual space $\mathfrak{g}^*\ltimes V^*$. Here, we have demonstrated that the same is true for the linearised equation on the semidirect product space. That is, the linearised system of equations \eqref{eqn:linearised_LP_semidirect1} and \eqref{eqn:linearised_LP_semidirect2} has the same geometric form as the linearised Lie-Poisson equation \eqref{eqn:linearised_LP}.

\paragraph{Euler-Poincar\'e equations for semidirect product spaces.} When the Legendre transform is well defined, the equations \eqref{eqn:linearised_LP_semidirect1} and \eqref{eqn:linearised_LP_semidirect2} have equivalent forms in terms of the Lagrangian, $\ell:\mathfrak{g}\times V^* \rightarrow \mathbb{R}$. As was achieved for the Lie-Poisson equations, one may deduce these equations directly from the following Euler-Poincar\'e equations
\begin{align}
    \left( \p_t + \ad^*_u \right)\frac{\delta \ell}{\delta u} &= \frac{\delta\ell}{\delta a}\diamond a
    \label{eqn:EP1}
    \,,\\
    \left( \p_t + \mathcal{L}_u \right)a &= 0
    \label{eqn:EP2}
    \,.
\end{align}
Again utilising the calculation performed in equation \eqref{eqn:useful_expansion}, we find the expansion of these equations as
\begin{align*}
    \left( \p_t + \ad^*_{u+\epsilon\delta u} \right)\left(\frac{\delta \ell}{\delta u} + \frac{\epsilon}{2}\frac{\delta(\delta^2\ell)}{\delta(\delta u)}\right) &= \left(\frac{\delta \ell}{\delta a} + \frac{\epsilon}{2}\frac{\delta(\delta^2\ell)}{\delta(\delta a)}\right)\diamond \left(a + \epsilon\delta a \right) + \mathcal{O}(\epsilon^3)
    \,,\\
    \left( \p_t + \mathcal{L}_{u+\epsilon\delta u} \right)\left( a + \epsilon\delta a \right)&= \mathcal{O}(\epsilon^3)
    \,.
\end{align*}
As for the Hamiltonian case, the first order terms are the Euler-Poincar\'e equations and the linearised equations are given by the order $\epsilon$ terms as
\begin{align}
    \left( \p_t + \ad^*_u \right)\left( \frac12\frac{\delta(\delta^2\ell)}{\delta(\delta u)} \right) &= -\ad^*_{\delta u}\frac{\delta\ell}{\delta u} + \frac{\delta \ell}{\delta a} \diamond \delta a + \frac12 \frac{\delta(\delta^2 \ell)}{\delta(\delta a)}\diamond a
    \label{eqn:linearised_EP_semidirect1}
    \,,\\
    \left( \p_t + \mathcal{L}_{u} \right)\delta a + \mathcal{L}_{\delta u}a &= 0
    \,.
    \label{eqn:linearised_EP_semidirect2}
\end{align}

\subsection{Second order variations of the Euler-Poincar\'e variational principle}\label{subsec:variational_linearisation}

In the previous linearised equations, the perturbations $(\delta u , \delta a)$ have been arbitrary. When understanding such equations \emph{within} the variational principle itself, these perturbations become constrained.

\paragraph{The Euler-Poincar\'e variational principle.} The equations \eqref{eqn:linearised_EP_semidirect1} and \eqref{eqn:linearised_EP_semidirect2} can also be deduced by considering the next variation in Hamilton's principle. When deducing symmetry reduced equations from the variational principle, arbitrary variations in the group are not arbitrary in the algebra. In particular, for an Eulerian velocity vector field $u = \p_t g \cdot g^{-1} \in \mathfrak{g}$, where concatination denotes the lifted right translation of $\p_tg\in T_gG$ by $g^{-1}$, an arbitrary variation $\delta g$ of the group element allows the variation of $u$ to be expressed in terms of an arbitrary vector field $\xi = \delta g \cdot g^{-1} \in \mathfrak{g}$. We may deduce the forms of the first and second variation by expanding the vector field 
$u_\ep=\p_tg_{t,\ep}\cdot g_{t,\ep}^{-1}$ in a Taylor series in powers of a small parameter $\ep\ll1$ around the identity $\ep=0$ as
\begin{align}
u_\ep &= u + \ep (\p_t \xi - {\rm ad}_u \xi) + \frac{\ep^2}{2} \left(\p_t \delta\xi - {\rm ad}_{\delta u} \xi - {\rm ad}_u \delta\xi \right) + \mathcal{O}(\ep^2)
=: u + \ep\delta u +  \frac{\ep}{2}\delta^2 u + \mathcal{O}(\ep^2)
\,.\label{Var-exp-def-}
\end{align}
Likewise, for an advected quantity which evolves by push-forward as $a_t = a_0g_t^{-1}$ one defines the Taylor series for the variation as
\begin{align}
a_\ep &= u|_{\ep=0} - \ep \mathcal{L}_\xi a - \frac{\ep^2}{2}\left( \mathcal{L}_{\delta\xi} a + \mathcal{L}_\xi \delta a\right) +\mathcal{O}(\ep^2)
=: a +  \ep\delta a +  \frac{\ep}{2}\delta^2 a +\mathcal{O}(\ep^2)
\,.\label{Var-exp-def-a}
\end{align}
Equations \eqref{Var-exp-def-} and \eqref{Var-exp-def-a} comprise a second order extension of the \emph{Lin constraints} used to derive the original Euler-Poincar\'e equations (see e.g. \cite{BKMR1996}).

As in Section \ref{subsec:EP_and_LP}, the Euler-Poincar\'e equations \eqref{eqn:EP1} and \eqref{eqn:EP2} can be deduced from Hamilton's principle by making use of these variations 
\begin{equation}\label{eqn:HamPrin_EP}
\begin{aligned}
    0 &= \delta S = \delta\int \ell(u,a)\,dt = \int \scp{\frac{\delta\ell}{\delta u}}{\delta u} + \scp{\frac{\delta\ell}{\delta a}}{\delta a} \,dt
    \\
    &= \int \scp{\frac{\delta\ell}{\delta u}}{\p_t \xi - {\rm ad}_u \xi} - \scp{\frac{\delta\ell}{\delta a}}{\mathcal{L}_\xi a} \,dt = -\int \scp{\left( \p_t + \ad^*_u \right)\frac{\delta\ell}{\delta u} - \frac{\delta\ell}{\delta a}\diamond a}{\xi} \,dt \,.
\end{aligned}
\end{equation}
At the second order, we have
\begin{equation}
\begin{aligned}
    0=\delta^2S &= \int \scp{\frac12\frac{\delta(\delta^2\ell)}{\delta(\delta u)}}{\delta u} + \scp{\frac12\frac{\delta(\delta^2\ell)}{\delta(\delta a)}}{\delta a} + \scp{\frac{\delta\ell}{\delta u}}{\delta^2 u} + \scp{\frac{\delta\ell}{\delta a}}{\delta^2 a} \,dt
    \\
    &= \int \scp{\frac12\frac{\delta(\delta^2\ell)}{\delta(\delta u)}}{\p_t \xi - {\rm ad}_u \xi} - \scp{\frac12\frac{\delta(\delta^2\ell)}{\delta(\delta a)}}{\mathcal{L}_\xi a}
    \\
    &\qquad\qquad + \scp{\frac{\delta\ell}{\delta u}}{\p_t \delta\xi - {\rm ad}_{\delta u} \xi - {\rm ad}_u \delta\xi} - \scp{\frac{\delta\ell}{\delta a}}{ \mathcal{L}_{\delta\xi} a + \mathcal{L}_\xi \delta a} \,dt
    \\
    &= \int \scp{-(\p_t + \ad^*_u)\frac{\delta\ell}{\delta u}+\frac{\delta\ell}{\delta a}\diamond a}{\delta \xi}
    \\
    &\qquad\qquad + \scp{-(\p_t +\ad^*_u)\frac12\frac{\delta(\delta^2\ell)}{\delta(\delta u)} + \frac12\frac{\delta(\delta^2\ell)}{\delta(\delta a)}\diamond a + \frac{\delta\ell}{\delta a}\diamond \delta a - \ad^*_{\delta u}\frac{\delta\ell}{\delta u}}{\xi}\,dt \,.
\end{aligned}
\end{equation}
The arbitrary nature of the vector field $\xi$ and, by extension, $\delta\xi$ gives the Euler-Poincar\'e equation \eqref{eqn:EP1} and the equation \eqref{eqn:linearised_EP_semidirect1} for the linear perturbation. This is the symmetry-reduced version of taking second order variations in first order variational principle used in e.g. \cite{CF1996} to derive dynamics of Jacobi fields.

In practice, this process will give us equations for $\delta u$ and $\delta a$. Since $(\delta u , \delta a)$ can be expressed in terms of the arbitrary variable $\xi = \delta g \cdot g^{-1}$, these imply an equation for $\xi$. However, as the following Proposition demonstrates, deriving an equation for $\xi$ only requires the equation for $\delta u$, since the equation for $\delta a$ is satisfied trivially.
\begin{proposition}
    Given the constrained form of the first variations
    \begin{equation}\label{eqn:Lin_constraints}
        \delta u = \p_t\xi -\ad_u\xi
        \,,\quad\hbox{and}\quad
        \delta a = -\mathcal{L}_ua
        \,,
    \end{equation}
    and the fact that $a\in V^*$ is advected by $u\in\mathfrak{X}(M)$, we find that the equation \eqref{eqn:linearised_EP_semidirect2} is satisfied.
\end{proposition}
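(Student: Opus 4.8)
The plan is to verify \eqref{eqn:linearised_EP_semidirect2} by direct substitution of the constrained first variations into its left-hand side and checking that all terms cancel. I would begin by writing out $\left(\p_t + \mathcal{L}_u\right)\delta a + \mathcal{L}_{\delta u} a$ using $\delta a = -\mathcal{L}_\xi a$ from \eqref{EP-var-def} together with $\delta u = \p_t\xi - \ad_u\xi = \p_t\xi + [u,\xi]$, recalling that the sign convention of the excerpt sets $-\ad_u\xi = [u,\xi]$.

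The first computational step is to apply the Leibniz rule in time to the displacement-induced variation, $\p_t\big(\mathcal{L}_\xi a\big) = \mathcal{L}_{\p_t\xi} a + \mathcal{L}_\xi\big(\p_t a\big)$, and then to eliminate $\p_t a$ using the advection hypothesis $\p_t a = -\mathcal{L}_u a$. Expanding $\mathcal{L}_{\delta u} a = \mathcal{L}_{\p_t\xi} a + \mathcal{L}_{[u,\xi]} a$ as well, the accumulated terms fall into three groups: a term $\mathcal{L}_{\p_t\xi} a$ arising from $\p_t\delta a$, a matching term $\mathcal{L}_{\p_t\xi} a$ arising from the $\p_t\xi$ part of $\delta u$, and the remaining "double Lie derivative" contributions $\mathcal{L}_\xi\mathcal{L}_u a$, $-\mathcal{L}_u\mathcal{L}_\xi a$, plus the bracket term $\mathcal{L}_{[u,\xi]} a$.

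The key step, which is the crux of the argument, is to observe that the time-derivative-of-$\xi$ contributions cancel against one another by construction, and that the surviving second-order terms organise into a commutator of Lie derivatives plus a Lie derivative along the Jacobi-Lie bracket. Invoking the standard operator identity $[\mathcal{L}_\xi,\mathcal{L}_u] = \mathcal{L}_{[\xi,u]}$, one has $\mathcal{L}_\xi\mathcal{L}_u a - \mathcal{L}_u\mathcal{L}_\xi a = \mathcal{L}_{[\xi,u]} a = -\mathcal{L}_{[u,\xi]} a$, which cancels exactly against the remaining $+\mathcal{L}_{[u,\xi]} a$. Hence the entire left-hand side of \eqref{eqn:linearised_EP_semidirect2} vanishes identically.

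I expect the main obstacle to be bookkeeping rather than conceptual: one must track carefully the sign relating $\ad_u\xi$ to the Jacobi-Lie bracket and the ordering in $[\mathcal{L}_\xi,\mathcal{L}_u]$, since a single sign slip would spoil the final cancellation. Conceptually the statement is a consistency check — it says that the linearised advection constraint \eqref{eqn:linearised_EP_semidirect2} is already implied by the definitions of $\delta a$ and $\delta u$ together with the advection of $a$, so that deriving the dynamics for $\xi$ requires only the equation for $\delta u$.
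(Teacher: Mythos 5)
Your proposal is correct and follows essentially the same route as the paper's proof: substitute the constrained variations, expand the time derivative with Leibniz, eliminate $\p_t a$ via advection, cancel the $\mathcal{L}_{\p_t\xi}a$ terms, and close with the commutator identity $[\mathcal{L}_\xi,\mathcal{L}_u]=\mathcal{L}_{[\xi,u]}$ (what the paper calls the standard relationship between the adjoint representation and the Lie bracket for right-invariant systems). You also correctly read the constraint as $\delta a = -\mathcal{L}_\xi a$ in accordance with \eqref{EP-var-def}, which is what the paper's own computation uses despite the typo in the proposition statement.
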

\begin{proof}
    We may verify this claim by direct computation. Indeed, notice that
    \begin{equation*}
    \begin{aligned}
        \p_t\delta a + \mathcal{L}_u \delta a + \mathcal{L}_{\delta u}a &= -\p_t\mathcal{L}_\xi a - \mathcal{L}_a\mathcal{L}_\xi a + \mathcal{L}_{\p_t\xi - \ad_u\xi}a
        \\
        &= -\mathcal{L}_{\p_t\xi}a - \mathcal{L}_\xi \p_ta - \mathcal{L}_a\mathcal{L}_\xi a + \mathcal{L}_{\p_t\xi}a - \mathcal{L}_{\ad_u\xi}a
        \\
        &= \mathcal{L}_\xi\mathcal{L}_ua - \mathcal{L}_u\mathcal{L}_\xi a - \mathcal{L}_{\ad_u\xi}a = 0 \,,
    \end{aligned}
    \end{equation*}
    where the final line is a consequence of the standard relationship between the adjoint representation and the Lie bracket for right-invariant systems.
\end{proof}

\paragraph{The Legendre transform and Lie-Poisson equations.} As was described in Section \ref{subsec:EP_and_LP}, the Lie-Poisson equations on semidirect product Lie co-algebras can be deduced by Legendre transforming within the application of Hamilton's Principle illustrated by equation \eqref{eqn:HamPrin_EP}. That is, we consider the following variational problem
\begin{equation}\label{eqn:HamPrin_LP}
\begin{aligned}
    0 = \delta S(\mu,a;u) &= \delta\int \scp{\mu}{u}-h(\mu,a)\,dt = \int \scp{\delta\mu}{u - \frac{\delta h}{\delta\mu}}+\scp{\mu}{\delta u} - \scp{\frac{\delta h}{\delta a}}{\delta a} \,dt
    \\
    &= \int \scp{\delta\mu}{u - \frac{\delta h}{\delta\mu}}+\scp{\mu}{\p_t \xi - {\rm ad}_u \xi} + \scp{\frac{\delta h}{\delta a}}{\mathcal{L}_\xi a} \,dt \\
    &= \int \scp{\delta\mu}{u - \frac{\delta h}{\delta\mu}}-\scp{\left( \p_t + \ad^*_u \right)\mu + \frac{\delta h}{\delta a}\diamond a}{\xi} \,dt \,,
\end{aligned}
\end{equation}
which yields Lie-Poisson equation \eqref{LP-eqns1}. Again, considering the second variation of this we have
\begin{equation}
\begin{aligned}
    0=\delta^2S(\mu,a;u) &= \int \scp{\delta\mu}{\delta u - \frac12\frac{\delta(\delta^2h)}{\delta(\delta \mu)}} + \scp{\delta^2\mu}{u - \frac{\delta h}{\delta\mu}}
    \\
    &\qquad\qquad +\scp{\delta\mu}{\delta u} -\scp{\frac12\frac{\delta(\delta^2h)}{\delta(\delta a)}}{\delta a}+ \scp{\mu}{\delta^2u} - \scp{\frac{\delta h}{\delta a}}{\delta^2 a} \,dt
    \\
    &= \int \scp{\delta\mu}{\delta u - \frac12\frac{\delta(\delta^2h)}{\delta(\delta \mu)}} + \scp{\delta^2\mu}{u - \frac{\delta h}{\delta\mu}}
    \\
    &\qquad\qquad +\scp{\delta\mu}{\p_t \xi - \ad_u \xi} +\scp{\frac12\frac{\delta(\delta^2h)}{\delta(\delta a)}}{\mathcal{L}_{\xi} a}
    \\
    &\qquad\qquad + \scp{\mu}{\p_t \delta\xi - \ad_{\delta u} \xi - \ad_{u}\delta\xi} + \scp{\frac{\delta h}{\delta a}}{\mathcal{L}_{\xi}\delta a + \mathcal{L}_{\delta\xi}a} \,dt
    \\
    &= \int \scp{\delta\mu}{\delta u - \frac12\frac{\delta(\delta^2h)}{\delta(\delta \mu)}} + \scp{\delta^2\mu}{u - \frac{\delta h}{\delta\mu}}+ \scp{-(\p_t + \ad^*_u)\mu-\frac{\delta h}{\delta a}\diamond a}{\delta \xi}
    \\
    &\qquad\qquad + \scp{-(\p_t +\ad^*_u)\delta\mu - \frac12\frac{\delta(\delta^2h)}{\delta(\delta a)}\diamond a - \frac{\delta h}{\delta a}\diamond \delta a - \ad^*_{\delta u}\mu}{\xi}\,dt \,.
\end{aligned}
\end{equation}
The first two terms in the final line of this calculation give us the identities
\begin{equation*}
    u=\frac{\delta h}{\delta u}\,,\quad\hbox{and}\quad \delta u = \frac12\frac{\delta(\delta^2h)}{\delta(\delta \mu)}\,,
\end{equation*}
the arbitrary nature of $\delta\xi$ gives us the Lie-Poisson equation, and since $\xi$ is arbitrary we have the equation \eqref{eqn:linearised_LP_semidirect1} for the linear perturbation expressed in terms of the Hamiltonian.

\subsection{Jacobi fields and geodesics of a right-invariant metric on a Lie group}\label{subsec:Jacobi_geodesics}

In this section, we will specialise to the case in which the Euler-Poincar\'e equation and its linearisation arise from a Lagrangian which is defined by a right-invariant inner product. Since we are in the Euler-Poincar\'e setting \cite{HMR1998} and are motivated by fluid dynamics, it is natural to derive these equations in terms of pull-backs by diffeomorphisms and Lie derivatives with respect to the smooth vector fields which generate the diffeomorphisms. However, the literature on Jacobi fields in the context of Lie groups is predominantly concerned with expressions on the group, $G$, or algebra, $\mathfrak{g}$, rather than its dual. As such, the Jacobi equation is most familiarly expressed in terms of geometric objects such as the covariant derivative and Riemannian curvature, as opposed to the adjoint representation, Lie derivative, and diamond operation we have thus far employed. Of course, the transformation from Lie derivatives to covariant derivatives can be performed using standard methods for Riemannian spaces, \cite{Jost2008,Michor2006,Younes2007}. 

In this section, we will relate these fields by showing that, for the Lagrangian defined by our right-invariant inner product, the resulting Euler-Poincar\'e equation on $\mathfrak{g}^*$ is the geodesic equation. This will be illustrated by writing the equation in its equivalent form on $\mathfrak{g}$. Furthermore, since the equation is then expressed on $\mathfrak{g}$, we will demonstrate that the linearised Euler-Poincar\'e equation is equivalent to the Jacobi equation in which the Jacobi field is defined by the arbitrary variations in the Euler-Poincar\'e variational principle. This will involve a shift in notation in this section relative to the others in this paper.
   
For a Lie group, $G$, there exists a natural duality pairing between its algebra $\mathfrak{g}$, and its co-algebra $\mathfrak{g}^*$, which we will denote by $\scp{\cdot}{\cdot}_{\mathfrak{g}^*\times\mathfrak{g}}:\mathfrak{g}^*\times\mathfrak{g}\rightarrow\mathbb{R}$. If the Lie group is augmented further with a right-invariant Riemannian metric, then the algebra possesses a (weak) right-invariant inner product, denoted by $\scp{\cdot}{\cdot}_{\mathfrak{g}}:\mathfrak{g}\times\mathfrak{g}\rightarrow\mathbb{R}$. This metric permits the discussion of \emph{geodesics} in this setting. Indeed, consider the following application of Hamilton's Principle
\begin{equation}\label{eqn:geodesic_dualspace}
    0 = \delta \int \frac12\scp{u^\flat}{u}_{\mathfrak{g}^*\times\mathfrak{g}}\,dt \,,
\end{equation}
where $\flat:\mathfrak{g}\rightarrow \mathfrak{g}^*$ is the musical mapping defined by $\scp{u^\flat}{v}_{\mathfrak{g}^*\times\mathfrak{g}} = \scp{u}{v}_{\mathfrak{g}}$ for $u,v\in \mathfrak{g}$. We note that when the inner product is weak, the musical mapping $\flat$ is not surjective \cite{M2015}.
Thus, the inverse musical mapping $\sharp:\mathfrak{g}^*\rightarrow\mathfrak{g}$ is defined, on the image of $\flat$, by $\scp{\alpha}{v}_{\mathfrak{g}^*\times\mathfrak{g}} = \scp{\alpha^\sharp}{v}_{\mathfrak{g}}$ for $\alpha\in\mathfrak{g}^*$ and $v\in\mathfrak{g}$. When the inner product is strong, the musical mappings become isomorphisms.
Following the calculations in Section \ref{subsec:variational_linearisation}, we see that the variational problem \eqref{eqn:geodesic_dualspace} implies the following Euler-Poincar\'e equation
\begin{equation}\label{eqn:geodesic_equation_dualspace}
    \p_tu^\flat + \ad^*_uu^\flat = 0 \,.
\end{equation}
The equation for $\delta u$, given by going to the second order in Hamilton's Principle, is
\begin{equation}\label{eqn:Jacobi_equation_dualspace}
    \p_t\delta u^\flat + \ad^*_u\delta u^\flat = -\ad^*_{\delta u}u^\flat \,,
\end{equation}
and, when considered together with the constrained variations \eqref{eqn:Lin_constraints}, implies the following equation,
\begin{equation}
    (\p_t + \ad^*_u)\big(\left( \p_t\xi - \ad_u\xi \right)^\flat\big) = -\ad^*_{\p_t\xi - \ad_u\xi}u^\flat \,,
\end{equation}
written in terms of the variable $\xi = \delta g \cdot g^{-1} \in \mathfrak{g}$, which is a \emph{Jacobi field}. We will now seek to formalise this notion.

In what follows, will use definitions and notation following Michor \cite{Michor2006} and readers should consult this text for additional details of this construction. Consider a family of time-parameterised geodesics, $\{ g_{t,s} \}_{s\in\mathbb{R}}$. Then we define $u,\xi \in \mathfrak{g}$ by $u = \left[(\p_tg_{t,s})\cdot g_{t,s}\right]_{s=0}$ and $\xi = \left[(\p_sg_{t,s})\cdot g_{t,s}\right]_{s=0}$, and these associations can be understood as smooth maps $C^{\infty}(G,\mathfrak{g})$. Furthermore, the lifted right action by $g_{t,s}$ provides a map from $C^{\infty}(G,\mathfrak{g})$ to the space of vector fields on $G$, and as such we have an isomorphism, $C^{\infty}(G,\mathfrak{g})\cong\mathfrak{X}(G)$. Notice that $\p_tg_{t,s},\p_sg_{t,s}\in T_{g_{t,s}}G$ and the pushforward of $\p_t$ and $\p_s$ by $g_{t,\epsilon}$ can be understood as vector fields in $G$. This permits us to introduce the notation $\nabla_{\p_t}\xi$ by which we mean the Levi-Civita covariant derivative of $\xi$ along the curve $g_t$, which can be interpreted as an element of $C^{\infty}(G,\mathfrak{g})$. Since we have constructed a covariant derivative in this manner, we may similarly define the Riemannian curvature as a function of $u$ and $\xi$, $\mathcal{R}(\xi,u):C^{\infty}(G,\mathfrak{g})\rightarrow C^{\infty}(G,\mathfrak{g})$. In particular, this is understood in the usual sense in terms of $\mathfrak{X}(G)$, where the isomorphism is applied at each stage to identify the vector fields on $G$ with elements of Lie algebra.

\begin{theorem}
    Suppose $u$ is a solution of the geodesic equation \eqref{eqn:geodesic_equation_dualspace}, then we have the following infinite dimensional analogue of the Jacobi equation
    \begin{equation}
        \nabla_{\p_t}\nabla_{\p_t}\xi + \mathcal{R}(\xi,u)u = 0 \,.
    \end{equation}
\end{theorem}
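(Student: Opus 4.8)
The plan is to recognise $\xi$ as the variation field of the one-parameter family of geodesics introduced just before the statement, and to run the classical derivation of the Jacobi equation, but carried out with the right-translated (reduced) fields inside Michor's covariant calculus rather than on $\mathfrak{X}(G)$ directly. Writing $g(t,s)=g_{t,s}$ as a map $\mathbb{R}^2\to G$, I denote by $\p_t g$ and $\p_s g$ the images of the two coordinate directions; under the isomorphism $C^\infty(G,\mathfrak{g})\cong\mathfrak{X}(G)$ these reduce at $s=0$ to $u$ and $\xi$ respectively. The three ingredients I would assemble are: (i) the geodesic equation in covariant form, $\nabla_{\p_t}(\p_t g)=0$, holding for every $s$ since each $t$-curve is a geodesic, and equivalent under reduction to the hypothesis $\p_t u^\flat + \ad^*_u u^\flat=0$ in \eqref{eqn:geodesic_equation_dualspace}; (ii) the symmetry (torsion-free) lemma $\nabla_{\p_s}(\p_t g)=\nabla_{\p_t}(\p_s g)$, which reduces to $\nabla_{\p_s}u=\nabla_{\p_t}\xi$; and (iii) the definition of the curvature through the commutator of mixed second covariant derivatives, $\nabla_{\p_s}\nabla_{\p_t}Z-\nabla_{\p_t}\nabla_{\p_s}Z=\mathcal{R}(\p_s g,\p_t g)Z$, valid because the coordinate directions $\p_t,\p_s$ commute.

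With these in hand the computation is short. First I would differentiate (i) in the $s$-direction to obtain $\nabla_{\p_s}\nabla_{\p_t}(\p_t g)=0$. Then I would apply (iii) with $Z=\p_t g$ to rewrite this as $\nabla_{\p_t}\nabla_{\p_s}(\p_t g)+\mathcal{R}(\p_s g,\p_t g)(\p_t g)=0$. Finally I would use the symmetry lemma (ii) to replace $\nabla_{\p_s}(\p_t g)$ by $\nabla_{\p_t}(\p_s g)$ inside the first term, giving $\nabla_{\p_t}\nabla_{\p_t}(\p_s g)+\mathcal{R}(\p_s g,\p_t g)(\p_t g)=0$. Reducing along $s=0$ via the isomorphism, with $\p_t g\mapsto u$ and $\p_s g\mapsto\xi$, yields exactly $\nabla_{\p_t}\nabla_{\p_t}\xi+\mathcal{R}(\xi,u)u=0$.

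To make good on the paper's claim that this is the same equation as the linearised Euler-Poincar\'e equation \eqref{eqn:Jacobi_equation_dualspace}, I would additionally record the reduced form of the Levi-Civita connection of the weak right-invariant metric, $\nabla_{\p_t}\eta=\p_t\eta+\tfrac12\big(\ad_u^\top\eta+\ad_\eta^\top u-\ad_u\eta\big)$, with $\ad^\top$ the metric transpose defined by $(\ad_u^\top\eta)^\flat=\ad^*_u\eta^\flat$. Substituting $\eta=u$ recovers $\nabla_{\p_t}u=\p_t u+\ad_u^\top u$, which pins down the signs and reproduces \eqref{eqn:geodesic_equation_dualspace}; substituting $\eta=\xi$, inserting the induced curvature (again algebraic in $\ad,\ad^\top$), and using the reduced structure identity $\delta u=\p_t\xi-\ad_u\xi=\p_s u$ then reproduces \eqref{eqn:Jacobi_equation_dualspace} after applying $\flat$. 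This explicit but longer route is optional for establishing the theorem, yet it is precisely what licenses calling $\xi$ a \emph{Jacobi field}.

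The main obstacle is not the algebra but the functional-analytic legitimacy of the covariant calculus in this infinite-dimensional, weak-metric setting: for a weak right-invariant metric the Levi-Civita connection need not exist, the musical map $\flat$ is not surjective, and the curvature need not be a bounded operator. I would therefore lean on the specific construction of Michor \cite{Michor2006}, under whose hypotheses the connection, the symmetry lemma, and the curvature $\mathcal{R}(\xi,u):C^\infty(G,\mathfrak{g})\to C^\infty(G,\mathfrak{g})$ are all well defined, and I would check carefully that the isomorphism $C^\infty(G,\mathfrak{g})\cong\mathfrak{X}(G)$ genuinely intertwines $\nabla$ and $\mathcal{R}$ with their reduced algebraic counterparts, so that the reduction performed in the final step is valid.
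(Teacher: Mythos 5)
Your proposal is correct in substance but takes a genuinely different route from the paper's proof. The paper never leaves the reduced (Lie-algebra) picture: it takes the two equations produced by the variational principle --- the geodesic equation $\p_t u + \ad^\dagger_u u = 0$ and the linearised Euler--Poincar\'e equation $(\p_t+\ad^\dagger_u)(\p_t\xi-\ad_u\xi) = -\ad^\dagger_{\p_t\xi-\ad_u\xi}u$ --- expands $\p_{tt}\xi$, and, using the geodesic equation, antisymmetry of the bracket, and the identity $-\ad^\dagger_{\ad_\xi u}u = [\ad^\dagger_\xi,\ad^\dagger_u]u$, reduces everything to a purely algebraic equation in $\ad$ and $\ad^\dagger$, which it then identifies with $\nabla_{\p_t}\nabla_{\p_t}\xi+\mathcal{R}(\xi,u)u=0$ by citing Michor. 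You instead run the classical unreduced two-parameter argument (differentiate $\nabla_{\p_t}\p_t g=0$ in $s$, commute derivatives via the curvature, apply the symmetry lemma, restrict to $s=0$), citing Michor only for the legitimacy of the covariant calculus and for the fact that reduction intertwines $\nabla$ and $\mathcal{R}$ with their algebraic counterparts. Both routes lean on Michor at exactly the step where the infinite-dimensional, weak-metric difficulties are hidden, and both are valid; yours is shorter and conceptually transparent, while the paper's makes explicit that it is the \emph{variationally derived} perturbation equation \eqref{eqn:Jacobi_equation_dualspace} that coincides with the Jacobi equation, which is the point of the section. Note that your main argument bypasses that equation entirely --- the hypothesis enters as ``each $t$-curve is a geodesic'' rather than as the reduced linearised equation --- so without your ``optional'' final computation the theorem is proved but the advertised link to the second variation of Hamilton's principle is not exhibited; in the context of this paper that link is the content, so that part is not really optional. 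One small overclaim there: checking your reduced connection formula on $\eta=u$ does not ``pin down the signs'', since the antisymmetric term $-\tfrac12\ad_u\eta$ vanishes identically when $\eta=u$; that sign must instead be fixed by torsion-freeness together with the paper's right-invariance conventions ($-\ad_u\xi=[u,\xi]$), or by matching Michor's formula directly.
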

\begin{proof}
    Following on from the equations derived in Section \ref{subsec:variational_linearisation}, we have equations \eqref{eqn:geodesic_dualspace} and \eqref{eqn:Jacobi_equation_dualspace}. In order to prevent overuse of the musical isomorphisms, and to better connect with the existing literature in this direction, we will define $\ad^\dagger_{\Box}\Box:\mathfrak{g}\times\mathfrak{g}\rightarrow\mathfrak{g}$ in terms of $\ad^*_{\Box}\Box:\mathfrak{g}\times\mathfrak{g}^*\rightarrow\mathfrak{g}^*$ as
    \begin{equation}
        \ad^\dagger_u v = \left( \ad^*_u v^\flat \right)^\sharp \,.
    \end{equation}
    Note that this is simply the consequence of $\ad^*$ being the dual operator to $\ad$ with respect to the natural pairing $\scp{\cdot}{\cdot}_{\mathfrak{g}}$ and $\ad^\dagger$ being defined in the same manner with respect to the inner product $\scp{\cdot}{\cdot}_{\mathfrak{g}}$ on $\mathfrak{g}$. This allows us to write the equations entirely on the Lie algebra, without needing the dual space. Firstly, notice that applying Hamilton's Principle instead to the action $\frac12\int\scp{u}{u}_{\mathfrak{g}}\,dt$ yields the following equations
    \begin{align}
        \p_t u + \ad^\dagger_u u &= 0 \label{eqn:geodesic_algebra}
        \,,\\
        (\p_t+\ad^\dagger_u)(\p_t\xi - \ad_u\xi) &= - \ad^\dagger_{\p_t\xi - \ad_u\xi}u \label{eqn:Jacobi_equation_algebra}
        \,,
    \end{align}
    which correspond to rewriting equations \eqref{eqn:geodesic_dualspace} and \eqref{eqn:Jacobi_equation_dualspace} in terms of the operator $\ad^\dagger$. By using the linearity of $\ad^\dagger$ and moving terms in equation \eqref{eqn:Jacobi_equation_algebra} to the right hand side, we have
    \begin{align*}
        \p_{tt}\xi &= -\ad^\dagger_{\p_t\xi}u  + \ad_u\p_t\xi - \ad^\dagger_u\p_t\xi + \ad^\dagger_{\ad_u\xi}u+ \ad^\dagger_u\ad_u\xi+ \ad_{\p_tu}\xi
        \\
        \hbox{using equation \eqref{eqn:geodesic_algebra}}\quad &= -\ad^\dagger_{\p_t\xi}u  + \ad_u\p_t\xi - \ad^\dagger_u\p_t\xi+ \ad^\dagger_{\ad_u\xi}u + \ad^\dagger_u\ad_u\xi- \ad_{\ad^\dagger_uu}\xi
        \\
        \hbox{antisymmetry of the Lie bracket}\quad &= -\ad^\dagger_{\p_t\xi}u  + \ad_u\p_t\xi - \ad^\dagger_u\p_t\xi - \ad^\dagger_{\ad_\xi u}u - \ad^\dagger_u\ad_\xi u + \ad_\xi\ad^\dagger_uu
        \\
        &= -\ad^\dagger_{\p_t\xi}u  + \ad_u\p_t\xi - \ad^\dagger_u\p_t\xi + [ \ad^\dagger_\xi + \ad_\xi , \ad^\dagger_u ]u
        \,,
    \end{align*}
    where, in the final line, we have used the identity $-\ad^\dagger_{\ad_\xi u}u = [ \ad^\dagger_\xi , \ad^\dagger_u ]u$. As was shown by Michor \cite{Michor2006}, $\nabla_{\p_t}\nabla_{\p_t}\xi + \mathcal{R}(\xi,u)u$ is zero when the above equation is true.
\end{proof}

\section{Examples}\label{sec3-examples}

In each example, we will first derive the linearised equations in terms of arbitrary perturbations of our variables $(\delta u, \delta a)$. Following this, we will illustrate the equation in terms of the arbitrary vector field $\xi = \delta g \cdot g^{-1}$.

\subsection{The incompressible Euler equations}\label{subsec:Euler}

The $n$-dimensional Euler equations on a manifold $M$ correspond to a Lagrangian, $\ell_E:\mathfrak{X}(M)\times {\rm Den}(M)\rightarrow \mathbb{R}$, defined by
\begin{align}
    \ell_E(u,Dd^nx;\pi) &= \int_M \frac{D}{2}|\bu|^2 - \pi(D-1)\,d^nx 
    \label{eqn:Euler_Lagrangian}\,,\\
    \quad\hbox{and thus}\quad \frac12\delta^2\ell_E &= \int_M\frac{D}{2}|\bs{\delta u}|^2 + \delta D \bu \cdot \bs{\delta u} - \delta\pi \delta D\,d^nx \label{eqn:Euler_Lagrangian_var}\,,
\end{align}
where $u=\bu\cdot\nabla$ and $\delta u = \bs{\delta u}\cdot\nabla$ denote the vector fields expressed in terms of a basis. From equations \eqref{eqn:Euler_Lagrangian} and \eqref{eqn:Euler_Lagrangian_var}, we may compute the following variational derivatives
\begin{align*}
    \frac{\delta\ell_E}{\delta u} &= Du^\flat \otimes d^nx 
    \,,\quad
    \frac{\delta\ell_E}{\delta D} = \frac12|\bu|^2 - \pi
    \,,\quad
    \frac{\delta\ell}{\delta\pi} = (D-1)\,d^nx
    \,,\\
    \frac12\frac{\delta(\delta^2\ell_E)}{\delta(\delta u)} &= D\delta u^\flat \otimes d^nx + \delta D u^\flat \otimes d^nx
    \,,\quad
    \frac12\frac{\delta(\delta^2\ell_E)}{\delta(\delta D)} = \bu\cdot\bs{\delta u} - \delta\pi
    \,,\quad
    \frac12\frac{\delta(\delta^2\ell_E)}{\delta(\delta\pi)} = - \delta D\,d^nx  \,.
\end{align*}
We may assemble the variational derivatives of the Lagrangian into the Euler-Poincar\'e equations \eqref{eqn:EP1} and \eqref{eqn:EP2} gives
\begin{equation*}
    (\p_t + \mathcal{L}_u)(Du^\flat \otimes d^nx) = Dd\left( \frac12|\bu|^2 - \pi \right)\otimes d^nx \,,\quad\hbox{and}\quad (\p_t + \mathcal{L}_u)(D\,d^nx) = 0 \,.
\end{equation*}
The second of these equations, together with the variation in $\pi$ giving $D=1$, implies that $u$ is a divergence free vector field. The first of these equations is Euler's momentum equation in its geometric form. We may similarly assemble the variational derivatives of $\delta^2\ell / 2$ into the equations \eqref{eqn:linearised_EP_semidirect1} and \eqref{eqn:linearised_EP_semidirect2} as follows
\begin{align}
    \begin{split}
    \left( \p_t + \mathcal{L}_u \right)\left( D\delta u^\flat \otimes d^nx + \delta D u^\flat \otimes d^nx\right) &= - \mathcal{L}_{\delta u}\left( Du^\flat\otimes d^nx \right)
    \\
    &\qquad\quad + \delta Dd\left( \frac12|\bu|^2 - \pi \right)\otimes d^nx + Dd\left( \bu \cdot \bs{\delta u} - \delta\pi \right)\otimes d^nx \,,
    \end{split}
    \\
    (\p_t + \mathcal{L}_u)(\delta D\,d^nx) &= - \mathcal{L}_{\delta u}(D\,d^nx)
    \,.
\end{align}
The variations in $\pi$ and $\delta\pi$ imply that $D=1$ and $\delta D = 0 \,$. Hence $\delta u$ is a divergence free vector field and we have
\begin{equation}\label{eqn:Euler_linearised_geometric}
    (\p_t + \mathcal{L}_u)(\delta u^\flat) = - \mathcal{L}_{\delta u}u^\flat + d(\bu\cdot \bs{\delta u}- \delta \pi) \,,\quad\hbox{and}\quad (\ast d \ast)u^\flat = (\ast d \ast)\delta u^\flat = 0 \,,
\end{equation}
where $\ast$ denotes the Hodge star. In three dimensions, this can be represented in vector calculus notation as
\begin{equation}
    \p_t \bs{\delta u} - \bu\times {\rm curl}\, \bs{\delta u} = \bs{\delta u}\times {\rm curl}\,\bu - \nabla(\bu \cdot \bs{\delta u}+ \delta \pi) \,,\quad\hbox{and}\quad {\rm div}\, \bu = {\rm div}\, \bs{\delta u} = 0 \,,
\end{equation}
or, equivalently,
\begin{equation}\label{eqn:Euler_linearised_VC}
    \p_t \bs{\delta u} + \bu\cdot\nabla\bs{\delta u} + \bs{\delta u}\cdot\nabla\bu = -\nabla(\delta\pi) \,.
\end{equation}
Taking the approach described in Section \ref{subsec:variational_linearisation}, these equations, when written in terms of $\xi = (\delta g) \cdot g^{-1}$ give
\begin{equation}
    (\p_t + \ad^*_u)((\p_t\xi -\ad_u\xi)^\flat) = - \ad^*_{\p_t\xi -\ad_u\xi}u^\flat + d((\p_t\xi -\ad_u\xi)\intprod u^\flat - \delta \pi) \,.
\label{xi-eqns}
\end{equation}
This equation results from simply substituting in the relationship $\delta u = \p_t\xi - \ad_u\xi$ into the equation \eqref{eqn:Euler_linearised_geometric}. To obtain \eqref{xi-eqns} in vector calculus notation, the equation $\bs{\delta u} = \p_t\bxi + \bxi\cdot\nabla\bu - \bu\cdot\nabla\bxi$ can be considered alongside the equation \eqref{eqn:Euler_linearised_VC}.

\subsection{The stratified thermal rotating Euler equations}

We take a constant gravitational force to act in the direction of one of our coordinates, $z$, and introduce an advected parameter, $\rho$, which models thermal effects. Furthermore, we introduce a variable representing the effects of rotation, $\mathbf{R}$, which is a given function of $\mathbf{x}$ satisfying ${\rm curl}\bR(\bx) = 2\bs{\Omega}(\bx) = f(\mathbf{x})\mathbf{\hat{z}}$, where $f$ is the Coriolis parameter. The thermal rotating Euler equations then correspond to the following Lagrangian, $\ell_{tE}:\mathfrak{X}(M)\times {\rm Den}(M) \times \Lambda^0(M)\rightarrow \mathbb{R}$
\begin{align}
    \ell_{tE}(u,Dd^nx,\rho;\pi) &= \int_M D\rho \left(\frac{1}{2}|\bu|^2 + \bu\cdot\bR - gz\right) - \pi(D-1)\,d^nx \,,
    \label{eqn:thermal_Euler_Lagrangian}
    \\
    \begin{split}\label{eqn:thermal_Euler_Lagrangian_var}
    \frac12\delta^2\ell_{tE} &= \int_M \frac{D\rho}{2}|\bs{\delta u}|^2 + \left( \delta D \rho + D\delta\rho\right)\left( \bu\cdot\bs{\delta u} + \bs{\delta u}\cdot \bR \right) 
    \\
    &\qquad\quad + \delta D\delta \rho\left(\frac{|\bu|^2}{2} + \bu\cdot\bR - gz \right)- \delta\pi \delta D \,d^nx \,.
    \end{split}
\end{align}
The variational derivatives can be computed in a manner analogous to those found in Section \ref{subsec:Euler}. The Euler-Poincar\'e equations are
\begin{align}
    (\p_t+\ad^*_u)\left( D\rho(u^\flat + \bR\cdot d\bx)\otimes d^nx \right) &= Dd\left( \rho\varpi - \pi\right)\otimes d^nx - D\varpi d\rho\otimes d^nx
    \label{eqn:thermal_Euler_EP1}
    \,,\\
    (\p_t + \mathcal{L}_u)(D\,d^nx) &= 0
    \label{eqn:thermal_Euler_EP2}
    \,,\\
    (\p_t + \mathcal{L}_u)\rho &= 0
    \label{eqn:thermal_Euler_EP3}
    \,,
\end{align}
where
\begin{equation}
    \varpi := \frac{1}{2}|\bu|^2 + \bu\cdot\bR - gz \,.
\end{equation}
Making use of the advection equations and the pressure constraint $D=1$, the first of these is
\begin{equation}\label{eqn:thermal_Euler_EP4}
    \rho(\p_t + \ad^*_u)( u^\flat + \bR\cdot d\bx) = \rho d\varpi - d\pi \,. 
\end{equation}
In three dimensions, these equations can be expressed in vector calculus form as
\begin{align*}
    \p_t\bu + \bu\cdot\nabla\bu - \bu\cdot{\rm curl}\,\bR &= -g\bs{\hat z} - \frac{1}{\rho}\nabla\pi 
    \,,\\
    \p_t\rho + \bu\cdot\nabla\rho &= 0
    \,,\\
    \nabla\cdot\bu &= 0
    \,,
\end{align*}
where $\bs{\hat z}$ is the unit vector in the $z$ direction.
Similarly, the variational derivatives of the functional defined in equation \eqref{eqn:thermal_Euler_Lagrangian_var} can be substituted into the equations \eqref{eqn:linearised_EP_semidirect1} and \eqref{eqn:linearised_EP_semidirect2} to give
\begin{align}
\begin{split}
    (\p_t + \ad^*_u)&\left( D\rho\delta u^\flat\otimes d^nx + ( \delta D\rho + D\delta\rho)( u^\flat + \bR\cdot d\bx )\otimes d^nx \right) 
    \\
    &= -\mathcal{L}_{\delta u}\left(D\rho(u^\flat + \bR\cdot d\bx)\otimes d^nx \right) + \delta Dd\left( \rho\varpi - \pi \right)\otimes d^nx
    \\
    &\qquad + Dd\left( \rho(\bu\cdot\bs{\delta u} + \bs{\delta u}\cdot \bR) + \delta \rho \varpi - \delta\pi \right)\otimes d^nx
    \\
    &\qquad - D\varpi d(\delta \rho)\otimes d^nx - \left( D(\bu\cdot\bs{\delta u} + \bs{\delta u}\cdot \bR) + \delta D\varpi \right)d\rho\otimes d^nx \,,
\end{split}
    \\
    (\p_t + \mathcal{L}_u)(\delta D\,d^nx) &= -\mathcal{L}_{\delta u}(D\,d^nx)
    \,,\\
    (\p_t + \mathcal{L}_u)\delta\rho  &= -\mathcal{L}_{\delta u}\rho
    \,.
\end{align}
The first of these equations is simplified when using the constraints $D=1$ and $\delta D = 0$ and, after applying the product rule, we have
\begin{equation}
\begin{aligned}
    \rho(\p_t + \ad^*_u)\delta u^\flat + \delta \rho(\p_t + \ad^*_u)\left( u^\flat + \bR\cdot d\bx \right) &= \delta\rho \,d\varpi + \rho d\left(\bu\cdot\bs{\delta u} + \bs{\delta u}\cdot \bR \right)
    \\
    &\quad -\rho\mathcal{L}_{\delta u}\left(u^\flat + \bR\cdot d\bx \right) - d\left(\delta\pi \right) \,.
\end{aligned}
\end{equation}
Making use of the Euler-Poincar\'e equation \eqref{eqn:thermal_Euler_EP4}, this equation is
\begin{equation}\label{eqn:tEuler_linearised_geometric}
    (\p_t + \ad^*_u)\delta u^\flat = \frac{\delta\rho}{\rho^2}d\pi + d\left(\bu\cdot\bs{\delta u} + \bs{\delta u}\cdot \bR \right) - \mathcal{L}_{\delta u}\left(u^\flat + \bR\cdot d\bx \right) - \frac{1}{\rho}d\left(\delta\pi \right) \,.
\end{equation}
In vector calculus notation, the equations are given in three dimensions by
\begin{align}
    \p_t\bs{\delta u} + \bu\cdot\nabla \bs{\delta u} + \bs{\delta u}\cdot\nabla\bu - \bs{\delta u}\times {\rm curl}\,\bR &= \frac{\delta\rho}{\rho^2}\nabla\pi - \frac{1}{\rho}\nabla(\delta\pi) \label{eqn:tEuler_linearised_VC}
    \,,\\
    \p_t\delta\rho + \bu\cdot\nabla\delta\rho + \bs{\delta u}\cdot\nabla\rho &= 0
    \,,\\
    \nabla\cdot\bu = \nabla\cdot \bs{\delta u} &= 0 \,.
\end{align}
The equation for $\xi = \delta g \cdot g^{-1} = \bxi \cdot\nabla$ is given by substituting the equations
\begin{equation}\label{eqn:thermal_variations_geometric}
    \delta u = \p_t\xi -\ad_u\xi \,,\quad\hbox{and}\quad \delta\rho = - \mathcal{L}_\xi\rho \,,
\end{equation}
or
\begin{equation}\label{eqn:thermal_variations_VC}
    \bs{\delta u} = \p_t\bxi + \bxi\cdot\nabla\bu - \bu\cdot\nabla\bxi \,,\quad\hbox{and}\quad \delta\rho = -\bxi\cdot\nabla\rho \,,
\end{equation}
into equation \eqref{eqn:tEuler_linearised_geometric} or \eqref{eqn:tEuler_linearised_VC} respectively.

\subsection{The Euler-Boussinesq equations}\label{subsec:EB}
Notice that the introduction of thermal effects in the previous example results in the pressure, $\pi$, from the fluid remains in the equation governing the linear perturbation. This is not the case if one makes the Boussinesq approximation as follows. The Lagrangian is now defined to be
\begin{align}
    \ell_{EB}(u,Dd^nx,\rho;\pi) &= \int_M D \left(\frac{1}{2}|\bu|^2 + \bu\cdot\bR - g\rho z\right) - \pi(D-1)\,d^nx \,,
    \label{eqn:Euler_Boussinesq_Lagrangian}
    \\
    \frac12\delta^2\ell_{EB} &= \int_M \frac{D}{2}|\bs{\delta u}|^2 + \delta D\left( \bu\cdot\bs{\delta u} + \bs{\delta u}\cdot \bR \right) - gz \delta D\delta \rho - \delta\pi \delta D \,d^nx 
    \label{eqn:Euler_Boussinesq_Lagrangian_var}\,.
\end{align}
Proceeding as in the previous examples, the Euler-Poincar\'e equation is
\begin{equation}
    (\p_t + \ad^*_u)(u^\flat + \bR\cdot d\bx) = d\left( \frac12|\bu|^2 + \bu\cdot\bR - \pi \right) - g\rho dz \,,
\end{equation}
where the incompressibility constraint and the equation \eqref{eqn:EP1} has been rearranged after substituting in the variational derivatives of the Lagrangian, $\ell_{EB}$. In three dimensions, this equation is given by
\begin{align*}
    \p_t\bu + \bu\cdot\nabla\bu - \bu\times{\rm curl}\,\bR &= -\nabla\pi - g\rho\bs{\hat z}
    \,,\\
    \p_t\rho + \bu\cdot\nabla\rho &= 0
    \,,\\
    \nabla\cdot\bu &= 0
    \,.
\end{align*}
Computing the variational derivatives with respect to $\delta u$, $\delta D$, and $\delta\pi$ and substituting the results into equations \eqref{eqn:linearised_EP_semidirect1} and \eqref{eqn:linearised_EP_semidirect2}, we have
\begin{equation}
\begin{aligned}
    (\p_t + \ad^*_u)(D\delta u^\flat\otimes d^nx + \delta D (u^\flat+\bR\cdot d\bx)\otimes d^nx) &= -\mathcal{L}_{\delta u}(D(u^\flat+\bR\cdot d\bx)\otimes d^nx) 
    \\
    &\qquad + Dgz\,d(\delta\rho)\otimes d^nx + \delta Dgz\,d\rho\otimes d^nx
    \\
    &\qquad + \delta D d\left(\frac12|\bu|^2 + \bu\cdot\bR -g\rho z - \delta\pi \right)\otimes d^nx 
    \\
    &\qquad+ D d\left( \bu\cdot\bs{\delta u} + \bs{\delta u}\cdot\bR - gz\delta\rho - \delta\pi \right)\otimes d^nx \,.
\end{aligned}
\end{equation}
Making use of the constraints $D=1$ and $\delta D = 0$, resulting from the arbitrary variation in $\pi$ and $\delta\pi$, we have
\begin{equation}\label{eqn:EB_linearised_geometric}
    (\p_t + \ad^*_u)\delta u^\flat = -\mathcal{L}_{\delta u}(u^\flat+\bR\cdot d\bx) + gz\,d(\delta\rho) +d\left( \bu\cdot\bs{\delta u} + \bs{\delta u}\cdot\bR - gz\delta\rho - \delta\pi \right) \,.
\end{equation}
The equations, in three dimensions, are therefore
\begin{align}
    \p_t\bs{\delta u} + \bu\cdot\nabla\bs{\delta u} + \bs{\delta u}\cdot\nabla\bu - \bs{\delta u}\times {\rm curl}\,\bR &= -\nabla(\delta\pi) - g\,\delta\rho \bs{\hat z}  \label{eqn:EB_linearised_VC} 
    \,,\\
    \p_t\delta\rho + \bu\cdot\nabla\delta\rho + \bs{\delta u}\cdot\nabla\rho &= 0
    \,,\\
    \nabla\cdot\bu = \nabla\cdot \bs{\delta u} &= 0 \,.
\end{align}
As for the previous example, the equation for $\xi$ follows from substituting the equations \eqref{eqn:thermal_variations_geometric} into equation \eqref{eqn:EB_linearised_geometric} or equations \eqref{eqn:thermal_variations_VC} into equation \eqref{eqn:EB_linearised_VC}. 

\subsection{The 2D thermal rotating shallow water equations}

We here consider the two dimensional thermal rotating shallow water equations, which can be interpreted as an approximation to three dimensional models using vertical averaging. The Lagrangian is
\begin{align}
    \ell_{TRSW}(u,\eta\,d^nx,\rho) &= \int_{M\subseteq\mathbb{R}^2} \left(\frac{|\bu|^2}{2} + \bu\cdot\bR - \frac{g\rho}{2}(\eta - 2h) \right)\eta\,d^2x \,,
    \label{eqn:trsw_Lagrangian}
    \\
    \frac12\delta^2\ell_{TRSW} &= \int_{M\subseteq\mathbb{R}^2} \left( \frac{\eta}{2}|\bs{\delta u}|^2 + \delta\eta\, \bs{\delta u}\cdot(\bu + \bR) - g\,\delta\rho\,\delta\eta\,(\eta - h) - \frac{g\rho}{2}\delta\eta^2 \right)\,d^2x
    \label{eqn:trsw_Lagrangian_var}\,.
\end{align}
The variational derivatives are computed as follows
\begin{align*}
    \frac{\delta\ell}{\delta u} &= \eta (u^\flat + \bR\cdot d\bx)\otimes d^2x \,,\quad \frac{\delta\ell}{\delta \rho} = - \frac{g\eta}{2}(\eta - 2h)\,d^2x \,,\quad  \frac{\delta\ell}{\delta \eta} = \frac{|\bu|^2}{2} + \bu\cdot\bR - g\rho(\eta - h) \,,
    \\
    \frac12\frac{\delta(\delta^2\ell)}{\delta(\delta u)} &= (\eta \,\delta u^\flat + \delta\eta (u^\flat + \bR\cdot d\bx))\otimes d^2x \,,\quad      \frac12\frac{\delta(\delta^2\ell)}{\delta(\delta \rho)} = -g\,\delta\eta(\eta - h)\,d^2x \,,
    \\
    \frac12\frac{\delta(\delta^2\ell)}{\delta(\delta \eta)} &= \bs{\delta u}\cdot(\bu + \bR) -g\,\delta\rho(\eta - h) - g\rho\,\delta\eta \,.
\end{align*}
Assembling these into the Euler-Poincar\'e equation, making use of the fact that $\eta$ and $\rho$ are advected quantities, we have
\begin{equation}\label{eqn:TRSW_EP}
    (\p_t + \ad^*_u)(u^\flat + \bR\cdot d\bx) = d\left(\frac{|\bu|^2}{2} + \bu\cdot\bR \right) -g\rho\,d(\eta-h) - \frac{g\eta}{2}\,d\rho \,,
\end{equation}
which, in vector calculus notation, is
\begin{align}
    \p_t\bu + \bu\cdot\nabla\bu - (\nabla^\perp \cdot\bR)\bu^\perp = -g\rho\nabla(\eta-h) - \frac12 g\eta\nabla\rho 
    \,,\\
    \p_t\rho + \bu\cdot\nabla\rho &= 0
    \,,\\
    \p_t\eta + \nabla\cdot (\eta\bu) &= 0
    \,.
\end{align}
Assembling the variational derivatives into the equations \eqref{eqn:linearised_EP_semidirect1} and \eqref{eqn:linearised_EP_semidirect2}, we have
\begin{equation}
\begin{aligned}
    (\p_t + \ad^*_u)\left( (\eta \,\delta u^\flat + \delta\eta (u^\flat + \bR\cdot d\bx))\otimes d^2x \right) &= -\mathcal{L}_{\delta u}(\eta (u^\flat + \bR\cdot d\bx)\otimes d^2x)
    \\
    &\quad + \delta\eta\,d\left( \frac{|\bu|^2}{2} + \bu\cdot\bR - g\rho(\eta - h) \right)\otimes d^2x
    \\
    &\quad + \eta\,d\left( \bs{\delta u}\cdot(\bu + \bR) -g\,\delta\rho(\eta - h) - g\rho\,\delta\eta \right) \otimes d^2x
    \\
    &\quad + \frac{g\eta}{2}(\eta - 2h)\,d(\delta\rho)\otimes d^2x
    \\
    &\quad + g\,\delta\eta(\eta - h)\,d\rho\otimes d^2x \,.
\end{aligned}
\end{equation}
These equations can be simplified by applying the product rule and making use of the Euler-Poincar\'e equation \eqref{eqn:TRSW_EP} to give
\begin{equation}\label{eqn:TRSW_linearised_geometric}
\begin{aligned}
    (\p_t + \ad^*_u)\delta u^\flat &= -\mathcal{L}_{\delta u}(u^\flat + \bR\cdot d\bx) + d\left( \frac{|\bu|^2}{2} + \bu\cdot\bR \right)
    \\
    &\quad -g\,\delta\rho\,d(\eta - h) - g\rho\,d(\delta\eta)
-\frac{g\eta}{2}\,d(\delta\rho) - \frac{g}{2}\delta\eta\, d\rho \,. 
\end{aligned}
\end{equation}
The equations, in vector calculus form, are
\begin{align}
    \p_t\bs{\delta u} + \bu\cdot\nabla\bs{\delta u} + \bs{\delta u}\cdot\nabla\bu - (\nabla^\perp \cdot\bR)\bs{\delta u}^\perp &= \frac12g\,\delta\eta\nabla\rho - \frac12g\eta\nabla(\delta\rho) - g\,\delta\rho\nabla(\eta-h) - g\rho\nabla(\delta\eta)
    \label{eqn:TRSW_linearised_VC}
    \,,\\
    \p_t\delta\rho + \bu\cdot\nabla\delta\rho + \bs{\delta u}\cdot\nabla\rho &= 0
    \,,\\
    \p_t\delta\eta + \nabla\cdot(\delta\rho\bu) + \nabla\cdot(\rho\bs{\delta u}) &= 0
    \,.
\end{align}
The equation for $\xi = \delta g\cdot g^{-1}$ is given, in its geometric form, by substituting the equations
\begin{equation}
    \delta u = \p_t\xi -\ad_u\xi \,,\quad \delta\rho = - \mathcal{L}_\xi\rho \,,\quad\hbox{and}\quad \delta\eta = -\mathcal{L}_\xi\eta \,,
\end{equation}
into equation \eqref{eqn:TRSW_linearised_geometric}. Alternatively, in vector calculus notation, this equation corresponds to substituting
\begin{equation}
    \bs{\delta u} = \p_t\bxi + \bxi\cdot\nabla\bu - \bu\cdot\nabla\bxi \,,\quad \delta\rho = -\bxi\cdot\nabla\rho \,,\quad\hbox{and}\quad \delta \eta = - \nabla\cdot(\eta\bxi) \,,
\end{equation}
into equation \eqref{eqn:TRSW_linearised_VC}.

\section{Numerical simulations}\label{sec:numerics}
In this section, we consider numerical simulations of the linearised Euler-Poincar\'e equations and the associated dynamics of the perturbation vector field for the example of the Euler-Boussinesq (EB) equations given in Section \ref{subsec:EB}. Simplifying to a 2D vertical domain, the incompresibility conditions of both $\bu$ and $ \bs{\delta u} $ allow us to express the EB equations and their linearised equations in streamfunction and vorticty formulation. Using the Jacobian operator $\mcal{J}: C^\infty(M)\times C^\infty(M) \rightarrow \mathbb{R}$ in the vertical $(x, z)$-plane defined by
\begin{align}
    \mcal{J}(f, g) = - \hat{\bs{y}}\cdot \nabla f \times \nabla g = \p_x f \p_z g - \p_z f \p_x g \,,
\end{align}
we have the following equivalent formulation of equation \eqref{eqn:EB_linearised_VC}
\begin{align}
    \begin{split}
        &\p_t\omega + \mcal{J}(\psi, \omega) + g\mcal{J}(z, b) = 0\,,\\
        &\p_t b + \mcal{J}(\psi, b) = 0\,,\quad \text{where}\quad \omega := \triangle \psi \quad \text{and} \quad \bu = \nabla^\perp \psi\,,
    \end{split}\\
    \begin{split}
        &\p_t\delta \omega + \mcal{J}(\psi, \delta \omega) + \mcal{J}(\delta \psi, \omega) + g\mcal{J}(z, \delta b) = 0\,,\\
        &\p_t \delta b + \mcal{J}(\psi, \delta b) + \mcal{J}(\delta \psi, b) = 0\,,\quad \text{where}\quad \delta \omega := \triangle \delta \psi \quad \text{and} \quad \bs{\delta u} = \nabla^\perp \delta \psi\,.
    \end{split}
\end{align}
Substituting the Ansatz $\delta b = -\mathcal{L}_\xi b = -\bs{\xi}\cdot \nabla b$ where $\bs{\xi} \in\ \mathfrak{X}(M)$ is the perturbation vector field, we have the equivalent form of the linearised dynamics as
\begin{align}
    \begin{split}
        &\p_t\delta \omega + \mcal{J}(\psi, \delta \omega) + \mcal{J}(\delta \psi, \omega) - g\mcal{J}(z, \bs{\xi}\cdot\nabla b) = 0\,,\\
        &\p_t \bs{\xi} + (\bs{\xi}\cdot \nabla)\bu -\bu\cdot \nabla \bs{\xi} = \bs{\delta u} \,,\quad \text{where}\quad \delta \omega := \triangle \delta \psi \quad \text{and}\quad \bs{\delta u} = \nabla^\perp \delta \psi \,.
    \end{split}
\end{align}
The configurations of the numerical simulations are as follows. The computational domain is $\Omega = [0,1]\times [0,1]$ which is discretized using $256\times 256$ finite element cells. The boundary conditions for $\bu$ and $\bs{\delta u}$ are periodic in the $x$ direction and free slip in $z$. These boundary conditions can be enforced through their definition from the stream functions $\psi$ and $\delta \psi$ respectively, both of which have homogeneous Dirichlet boundary conditions in $y = 0, 1$ and are periodic in $x$. In the absence of viscosity, no other boundary conditions are required. The fluid vorticity $\omega$, perturbed vorticity $\delta \omega$ and buoyancy $b$ are approximated with the $2^{nd}$ order discontinuous Galerkin finite element space ($DG1$); the stream function $\psi$ and the perturbation stream function $\delta \psi$ are approximated with the $2^{nd}$ order continuous Galerkin finite element space ($CG1$); Lastly, the perturbation vector field $\bs{\xi}$, fluid velocity $\bu$, and perturbed fluid velocity $\bs{\delta u}$ are approximated with the vectorised $2^{nd}$ order continuous Galerkin finite element space. The numerical method is implemented using the firedrake software \cite{firedrake-manual} and we ran the simulation for a total of $16$ time units. The snapshots of interest are presented in Figures \ref{fig:pv_5} - \ref{fig:xi_8}.
\begin{figure}
    \centering
    \includegraphics[width=0.32\textwidth]{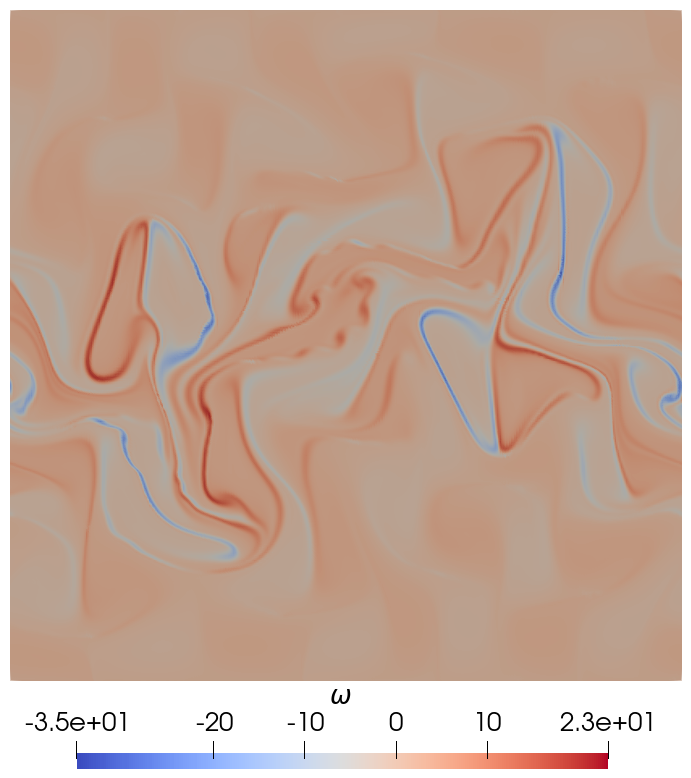}
    \includegraphics[width=0.32\textwidth]{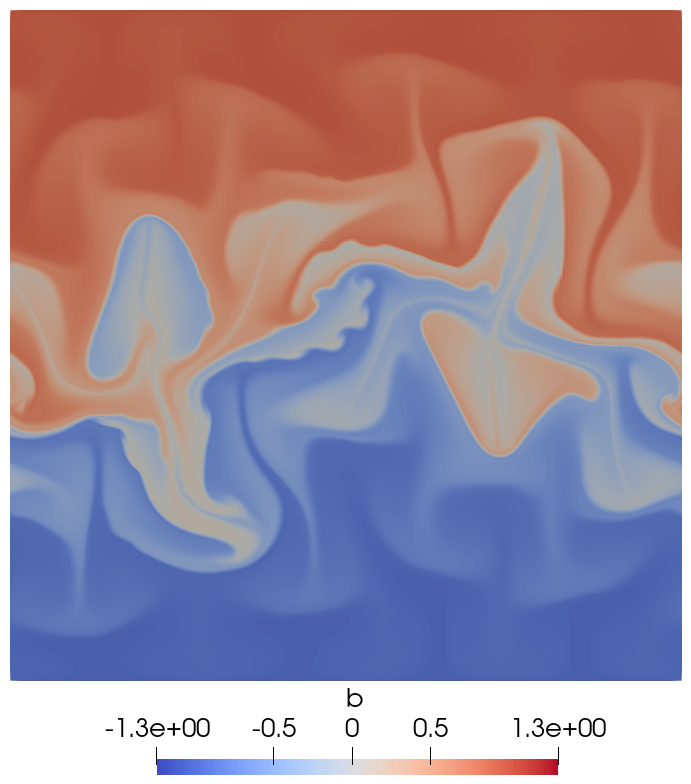}
    \includegraphics[width=0.32\textwidth]{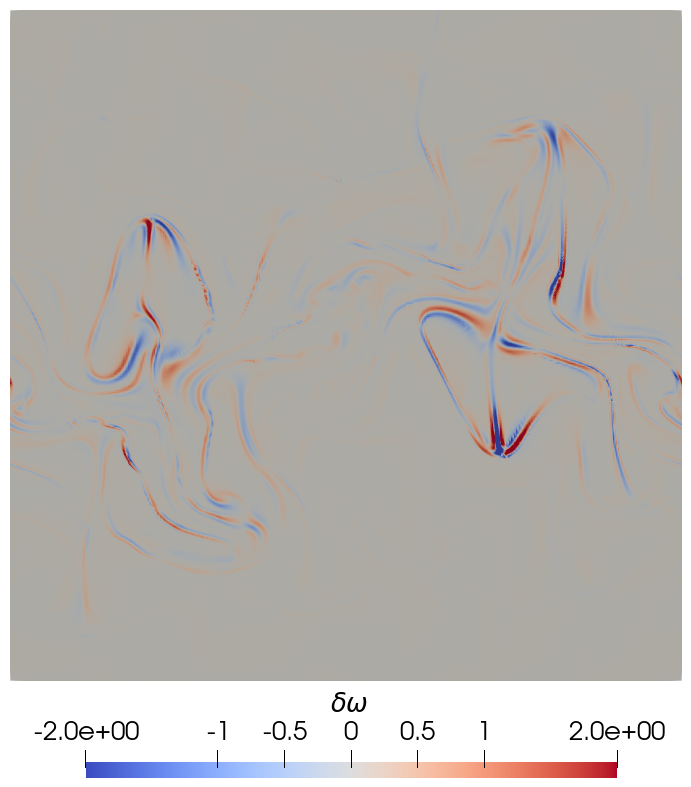}
    \caption{At $t=5$, one observes the initial phases of Kelvin Helmholtz instabilities generated by the initial conditions in fluid vorticity and buoyancy gradients in the snapshots of $\omega$ (left) and $b$ (middle). The inhomogeneous regions in the $\delta w$ snapshot (right) closely track the fronts of these instabilities, notably the downwards plume and upwards plume located at the left and right side of domain centre. This feature allows the perturbation vorticity $\delta \omega$ to be used a diagnostic for instabilities.}
    \label{fig:pv_5}
\end{figure}

\begin{figure}
    \centering
    \includegraphics[width=0.32\textwidth]{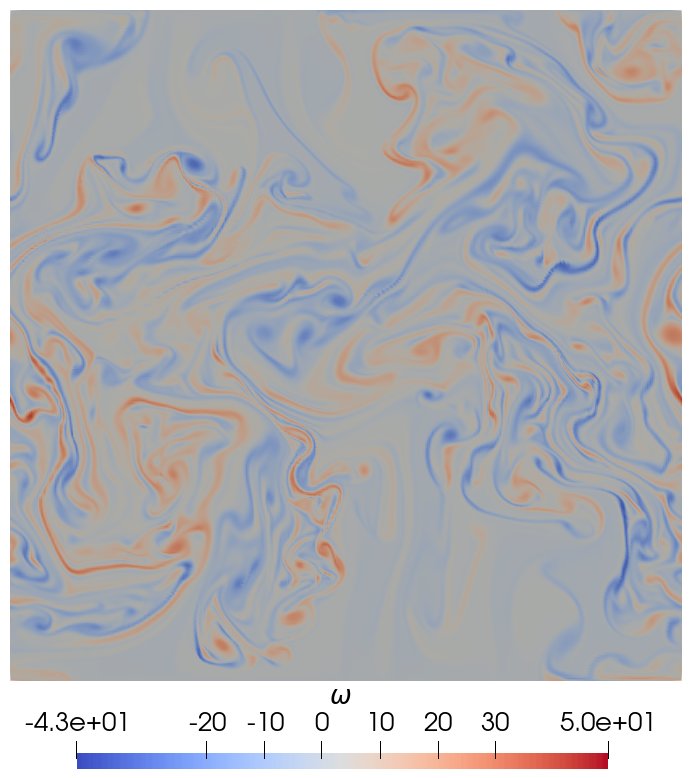}
    \includegraphics[width=0.32\textwidth]{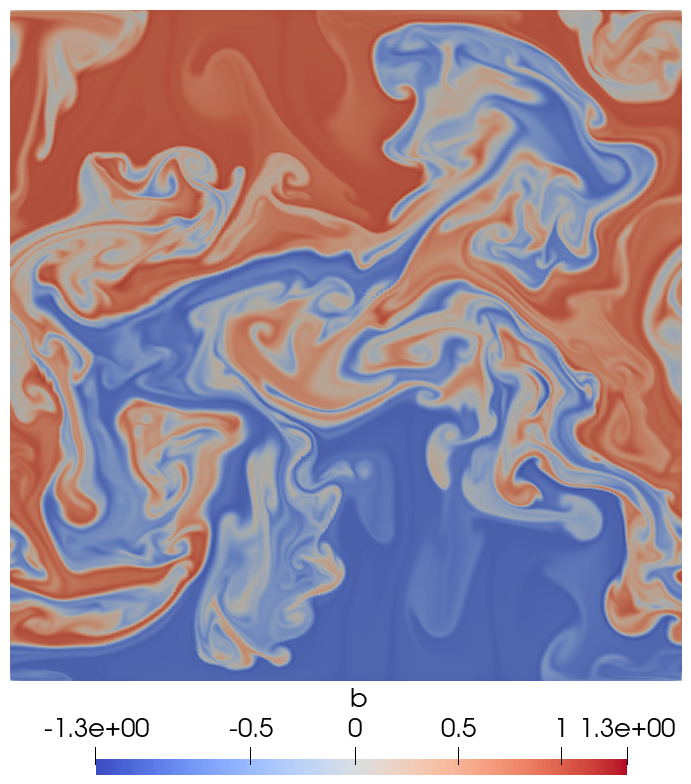}
    \includegraphics[width=0.32\textwidth]{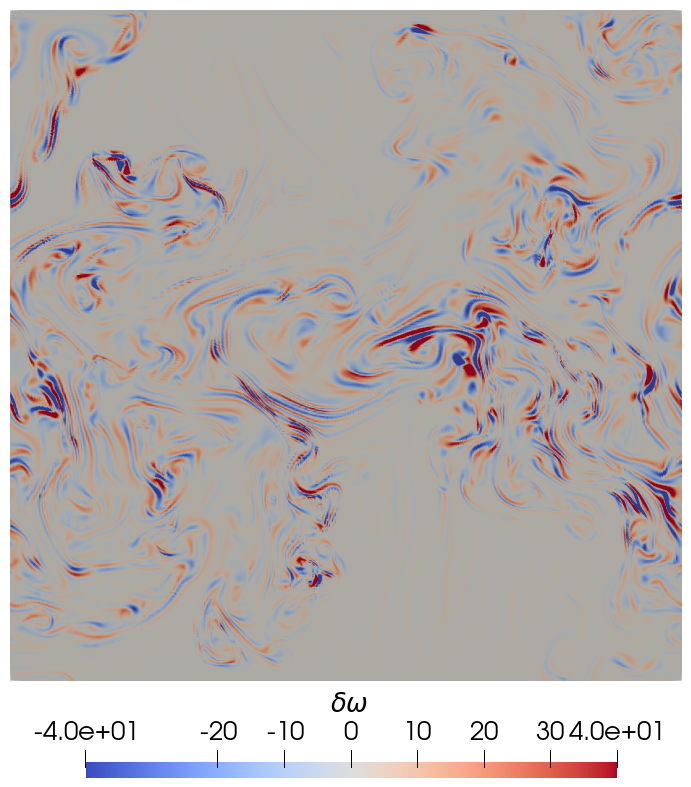}
    \caption{At $t=8$, the Kelvin Helmholtz instabilities are fully developed as they are visible in the $\omega$ (left) and $b$ (middle) snapshots. The perturbation vorticity $\delta \omega$ (right) shows strong correlations with these instabilities whilst the magnitude are $\approx 10$ times larger than the $\delta \omega$ snapshot at $t = 5$. We note that the region of largest $\delta \omega$ originated from the initial Kelvin Helmholtz instability shown in Figure \ref{fig:pv_5}. This is indeed expected as $\delta \omega$ is advected by the fluid velocity, $\bu$, with an additional forcing term, $\bs{\delta u}\cdot \nabla \omega$, that generates circulation of $\bs{\delta \bu}$.}
    \label{fig:pv_8}
\end{figure}

\begin{figure}
    \centering
    \includegraphics[width=0.30\textwidth]{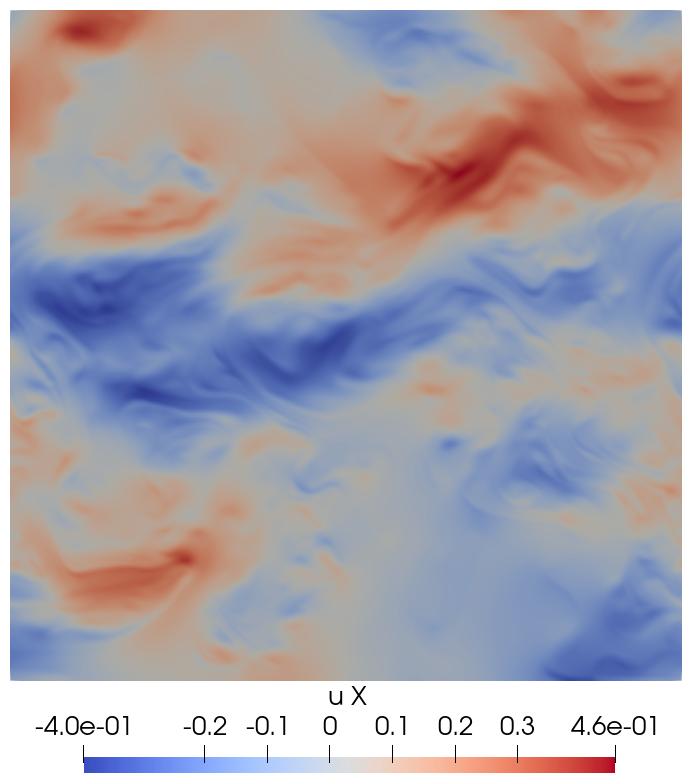}
    \includegraphics[width=0.30\textwidth]{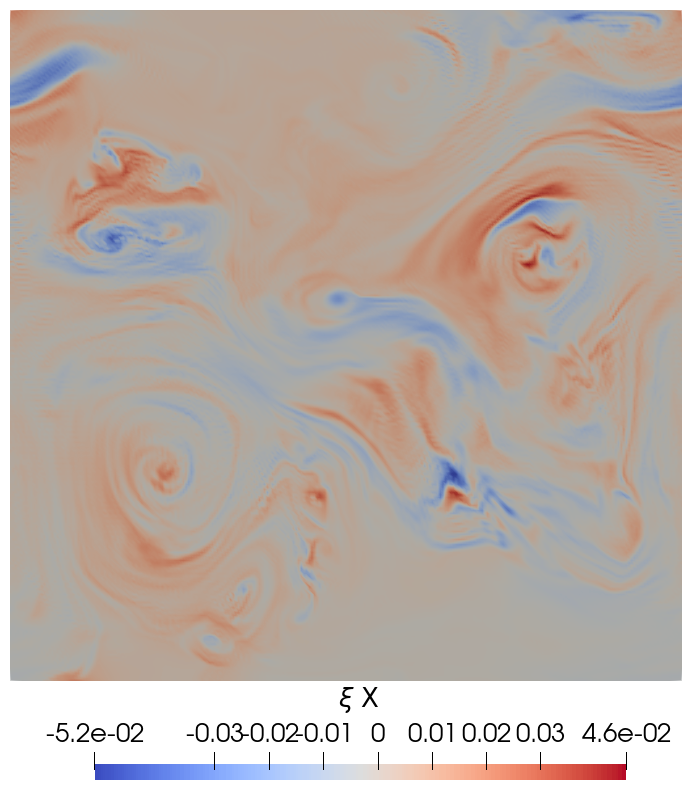}
    \includegraphics[width=0.30\textwidth]{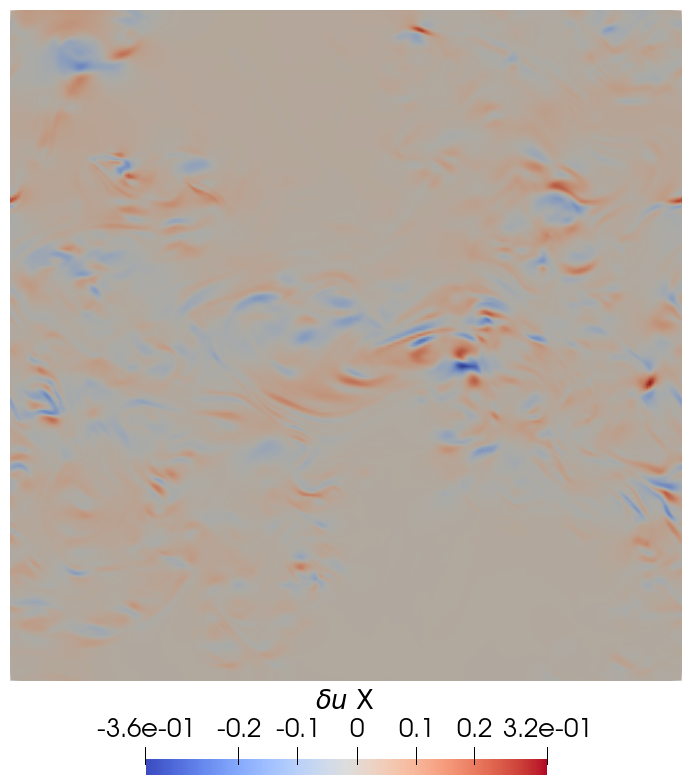}
    
    \includegraphics[width=0.30\textwidth]{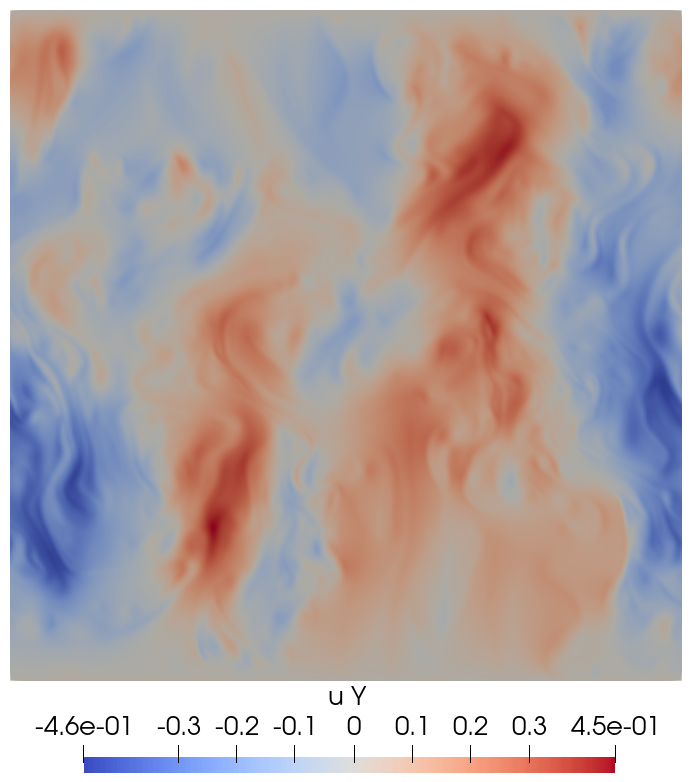}
    \includegraphics[width=0.30\textwidth]{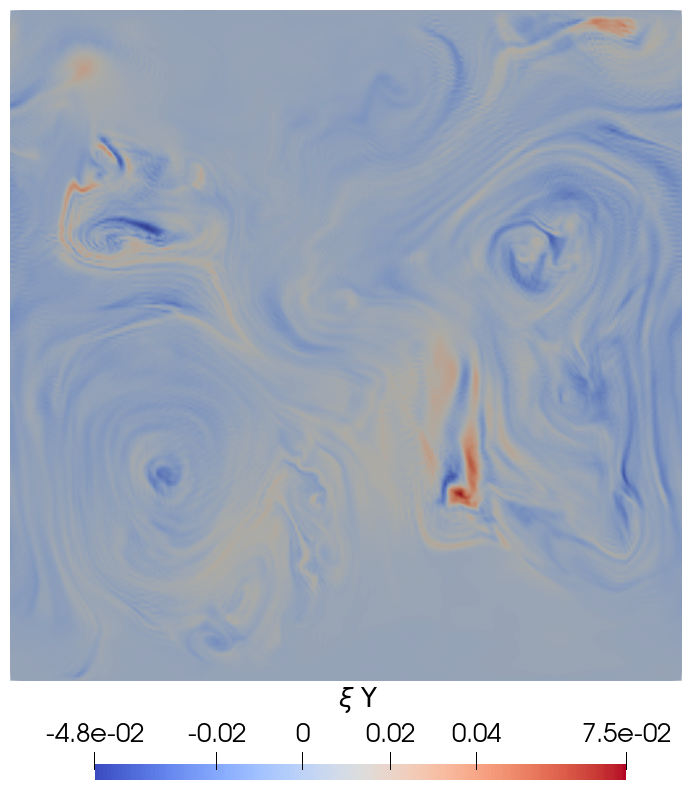}
    \includegraphics[width=0.30\textwidth]{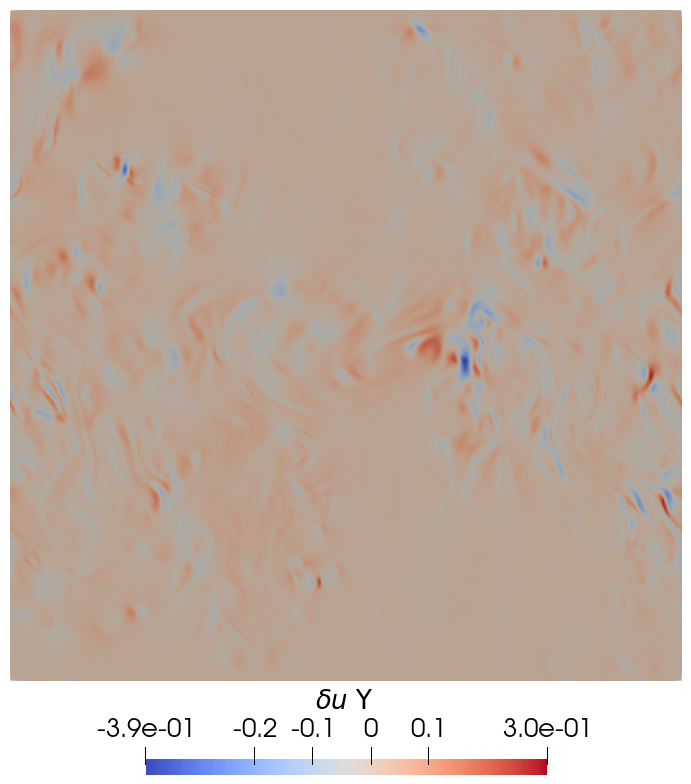}
    
    \caption{Snapshots of the fluid velocity field $\bu$ (left), perturbation vector field $\xi$ (middle), and the perturbed velocity field $\bs{\delta u}$ (right) at time $t= 8$.}
    \label{fig:xi_8}
\end{figure}

\section{Summary, open problems and outlook}\label{sec4-sumout}
The present paper has treated higher-order variations in the Euler-Poincar\'e variational principles with applications to ideal fluid dynamics. The 1$^{st}$ variations yield the well known models of ideal fluid dynamics via the Euler-Poincar\'e approach \cite{HMR1998}. The 2$^{nd}$ variations yield equations for linear perturbations propagating in the frame of the fluid motion arising from the 1$^{st}$ variation. The advantage of this approach is that the stability equations arising from the 2$^{nd}$ variation are perturbation equations for time-dependent flows, not only for equilibrium time-independent flows, although fluid equilibrium solutions are permitted. For the EPDiff equation for geodesics on a Lie group, the linearied equation derived using the methods discussed in this paper is shown to be equivalent to Jacobi's equation and thus the arbitrary vector field employed in the 1$^{st}$ variation is the Jacobi field. The physical examples of fluid models given in this paper are expressed naturally on semidirect product spaces and are thus not geodesic equations. It remains to understand how the broken symmetry of ideal fluid dynamics with advected quantities influences the Jacobi equation for the vector field $\xi = \delta gg^{-1}$, when expressed in terms of covariant derivatives, and hence the behaviour of nearby trajectories in the Lie group.



\subsection*{Acknowledgements}
We are grateful to C. Cotter, D. Crisan, J.-M. Leahy, A. Lobbe, J. Woodfield, as well as H. Dumpty for several thoughtful suggestions during the course of this work which have improved or clarified the interpretation of its results.
DH and RH were partially supported during the present work by Office of Naval Research (ONR) grant award  N00014-22-1-2082, ``Stochastic Parameterization of Ocean Turbulence for Observational Networks''. DH and OS were partially supported during the present work by European Research Council (ERC) Synergy grant ``Stochastic Transport in Upper Ocean Dynamics" (STUOD) -- DLV-856408.

\end{document}